\setlist{nosep}
\newcommand{\imped}{impedensable\xspace}
\newcommand{\Imped}{Impedensable\xspace}
\newcommand{\mds}{MDS\xspace}
\newcommand{\ds}{DS\xspace}
\newcommand{\vc}{VC\xspace}
\newcommand{\is}{IS\xspace}
\newcommand{\mvc}{MVC\xspace}
\newcommand{\mis}{MIS\xspace}
\newcommand{\gc}{GC\xspace}
\newcommand{\tds}{2DS\xspace}
\newcommand{\sdmds}{SDMDS\xspace}
\newtheorem{algorithm}{Algorithm}
\newtheorem{definition}{Definition}
\newtheorem{lemma}{Lemma}
\newtheorem{theorem}{Theorem}
\newtheorem{corollary}{Corollary}
\newtheorem{example}{Example}
\newtheorem{observation}{Observation}
\newcommand{\arya}[1]{{\color{black} #1}}
\newcommand{\revisedtext}[1]{{\color{black} #1}}
\begin{document}

\title{Eventually Lattice-Linear Algorithms\footnote{\textbf{To appear in the Journal of Parallel and Distributed Computing.}}\footnote{\textbf{The experiments presented in this paper were supported through computational resources and services provided by the Institute for Cyber-Enabled Research, Michigan State University.}}\footnote{\textbf{A preliminary version of this paper was published in Proceedings of the 23rd International Symposium on Stabilization, Safety, and Security of Distributed Systems (SSS 2021) \cite{Gupta2021}.}}\footnote{Email addresses: \texttt{\{atgupta,sandeep\}@msu.edu}}}

\author{Arya Tanmay Gupta}
\author{Sandeep S Kulkarni}

\affiliation{Computer Science and Engineering, Michigan State University}

\begin{abstract}
    Lattice-linear systems allow nodes to execute asynchronously. We introduce eventually lattice-linear algorithms, where lattices are induced only among the states in a subset of the state space. The algorithm guarantees that the system transitions to a state in one of the lattices. Then, the algorithm behaves lattice linearly while traversing to an optimal state through that lattice.
    
    We present a lattice-linear self-stabilizing algorithm for service demand based minimal dominating set (SDMDS) problem. Using this as an example, we elaborate the working of, and define, eventually lattice-linear algorithms. Then, we present eventually lattice-linear self-stabilizing algorithms for minimal vertex cover (\mvc), maximal independent set (\mis), graph colouring (\gc) and 2-dominating set problems (\tds).
    
    Algorithms for SDMDS, \mvc and \mis converge in 1 round plus $n$ moves (within $2n$ moves), \gc in $n+4m$ moves, and \tds in 1 round plus $2n$ moves (within $3n$ moves).
    These results are an improvement over the existing literature.  
    We also present experimental results to show performance gain demonstrating the benefit of lattice-linearity. 
\end{abstract}

\begin{keyword}
eventually lattice-linear algorithms \sep self-stabilization \sep asynchrony \sep minimal dominating set \sep minimal vertex cover \sep maximal independent set \sep graph colouring \sep 2-dominating set
\end{keyword}

\maketitle

\section{Introduction}\label{section:introduction}

A parallel/distributed system consists of several processes that collectively solve a problem while coordinating with each other. 
In such a system, to successfully solve the given problem, processes need to coordinate with each other. This could be achieved via shared memory (where the data is stored in a centralized store and each process can access the part that it needs) or message passing (where processes send their updates to each other via messages). As the level of parallelization (number of threads/processes) increases, in these systems, the need for synchronization increases as well. 

Synchronization primitives such as local mutual exclusion and dining philosophers are deployed to achieve the required synchronization among processes. Without proper synchronization, the execution of a parallel/distributed algorithm may be incorrect. 
As an example, in an algorithm for graph colouring, in each step, a node reads the colour of its neighbours, and if necessary, updates its own colour. Execution of this algorithm in a parallel or distributed system requires synchronization to ensure correct behaviour. For example, if two neighbouring nodes, say $i$ and $j$ change their colour simultaneously, the resulting action may be incorrect.
Specifically, consider the case where both $i$ and $j$ have colour $0$. Both nodes read the states of each other. Consider that $i$ executes first, node $i$ executes and changes its colour to $1$. This is as expected. However, when $j$ executes, it has old/inconsistent information about the colour of $i$. Hence, $j$ also will change its colour to $1$. Effectively, without proper synchronization, $i$ and $j$ can end up choosing the same colour and repeat such execution forever. 
The synchronization primitives discussed above eliminate such behaviour. However, they introduce an overhead, which can be very costly in terms of computational resources and time.

In general, executing a parallel/distributed program without synchronization can lead to erroneous behaviour. However, if correctness can be guaranteed even in asynchrony, then we can benefit from concurrent execution without bearing the cost of synchronization. In turn, this will be highly useful as the level of parallelization increases. 

Garg \cite{Garg2020} introduced the modelling of problems using a predicate under which the global states induce a partial order forming a lattice. 
The problems that can be represented by a predicate under which the global states form a lattice are called \textit{lattice-linear problems}, and such a predicate is called \textit{lattice-linear predicate}. \arya{In such problems, given an arbitrary global state, such nodes can be pointed out that must change their state to satisfy the requirements of the problem -- to reach an optimal global state.} Modelling problems in such a way allows them to be solved by parallel processing systems under asynchrony (where a node may read old information about others) while preserving correctness.

However, \cite{Garg2020} cannot be applied to problems where a lattice-linear predicate does not exist. 
\arya{For example, in a general case of graph colouring, a specific node cannot be pointed out that must change its state to reach an optimal state: for an arbitrary node, it is possible to reach an optimal global state without changing its local state.}
Moreover, the problems studied in \cite{Garg2020} require that the system is initialized in a specific state. Hence, it is not applicable for systems that require self-stabilization.

In this paper, we study problems that require self-stabilization and cannot be represented by a lattice-linear predicate. 
We develop \textit{eventually lattice-linear self-stabilizing algorithms}  where one or more lattices are induced in a subset of the state space algorithmically. 
\revisedtext{In other words, there are subsets $S_1, S_2\cdots$ of the state space such that a lattice is induced in each subset.}
We present eventually lattice-linear self-stabilizing algorithms for service demand based minimal dominating set (\sdmds), minimal vertex cover (\mvc), maximal independent set (\mis), graph colouring (\gc) and 2-dominating set (\tds) problems.
These algorithms first (1) guarantee that from any arbitrary state, the system reaches a state in one of the lattices, and then (2) these algorithms behave lattice-linearly, and make the system traverse that induced lattice and reach an optimal state.

We proceed as follows. We begin with the SDMDS problem which is a generalization of the minimal dominating set. We devise a self-stabilizing algorithm for SDMDS. We scrutinize this algorithm and decompose it into two parts, the second of which satisfies the lattice-linearity property of \cite{Garg2020} if it begins in a \textit{feasible} state.  Furthermore, the first part of the algorithm ensures that the algorithm reaches a \textit{feasible} state. We show that the resulting algorithm is self-stabilizing, and the algorithm 
has 
\textit{limited-interference} property (to be discussed in \Cref{subsection:ds-eventual}) due to which it is tolerant to the nodes reading old values of other nodes. 
We also demonstrate that this approach is generic. It applies to various other problems including \mvc, \mis, \gc and \tds. 

We also present some experimental results that show the efficacy of eventually lattice-linear algorithms in real-time shared memory systems. \arya{Specifically, we compare our algorithm for \mis (\Cref{algorithm:rules-mis}) with algorithms presented in \cite{Hedetniemi2003} and \cite{Turau2007}. The experiments are conducted in \texttt{cuda} environment, which is a shared memory model.}

\subsection{Contributions of the paper}

\begin{itemize}
    \item We present a self-stabilizing algorithm for the SDMDS problem. 
    \item We extend the theory of \textit{lattice-linear predicate detection} from \cite{Garg2020} to introduce the class of \textit{eventually lattice-linear self-stabilizing algorithms}.
    Such algorithms allow the system to run in asynchrony and ensure its convergence even when the nodes read old values. \arya{The algorithms presented in \cite{Garg2020} also allow asynchrony, however, they require that (1) the problems have only one optimal state, and (2)  the algorithm needs to start in specific initial states.}
    \item We show that the general design used to develop eventually lattice-linear algorithm for \sdmds can be extended to other problems such as
    \mvc, \mis, \gc and \tds problems.
    \item The algorithms for \sdmds, \mvc and \mis converge in 1 round plus $n$ moves, the algorithm for \gc converges in $n+4m$ moves, and the algorithm for \tds converges in 1 round plus $2n$ moves. Also, these algorithms do not require a synchronous environment to execute. Thus, these results are an improvement over the algorithms present in the literature.
    \item We use the algorithm for \mis, as an illustration, to show that this algorithm outperforms existing algorithms. 
    \arya{This is because these algorithms guarantee convergence in asynchrony. This, in turn, is possible because of the theoretical guarantees due to the lattice-linear nature of these algorithms. The maximum benefit of the asynchrony, allowed by such algorithms, will be observed in higher levels of concurrency, e.g., if only one thread performs execution for a node of the input graph.}
    
\end{itemize}

\noindent The major focus and benefit of this paper is inducing lattices algorithmically in non-lattice-linear problems to allow asynchronous executions. Some applications of the specific problems studied in this paper are listed as follows.
Dominating set is applied in communication and wireless networks where it is used to compute the virtual backbone of a network.
Vertex cover is applicable in (1) computational biology, where it is used to eliminate repetitive DNA sequences -- providing a set covering all desired sequences, and (2) economics, where it is used in camera instalments -- it provides a set of locations covering all hallways of a building.
Independent set is applied in computational biology, where it is used in discovering stable genetic components for designing engineered genetic systems.
Graph colouring is applicable in (1) chemistry, where it is used to design storage of chemicals -- a pair of reacting chemicals are not stored together,
and (2) communication networks, where it is used in wireless networks to compute radio frequency assignment.

\subsection{Organization of the paper}

This paper is organized as follows. In \Cref{section:preliminaries}, we describe some notations and definitions that we use in the paper.
In \Cref{section:sdmds-algorithm}, we describe the algorithm for the service demand based minimal dominating set problem. In \Cref{section:sdmds-lattice-linear}, we analyze the characteristics of that algorithm and show that it is eventually lattice-linear. We use the structure of eventually lattice-linear self-stabilizing algorithms to develop algorithms for minimal vertex cover, maximal independent set, graph colouring and 2-dominating set problems, respectively, in Sections \ref{section:mvc}, \ref{section:mis}, \ref{section:gc} and \ref{section:2ds}. We discuss the related work in \Cref{section:literature}. Then, in \Cref{section:experiments}, we compare the convergence speed of the algorithm presented in \Cref{section:mis} with other algorithms (for the maximal independent set problem) in the literature \arya{(specifically, \cite{Hedetniemi2003} and \cite{Turau2007})}.
Finally, we conclude in \Cref{section:conclusion}.

\section{Preliminaries}\label{section:preliminaries}

Throughout the paper, we denote $G$ to be an arbitrary 
undirected graph on which we apply our algorithms. $V(G)$ is the vertex-set and $E(G)$ is the edge-set of $G$. For any node $i$, $Adj_i$ is the set of nodes connected to $i$ by an edge in $G$.
$deg(i)$ equals $|Adj_i|$. For a natural number $x$, $[1:x]$ is the sequence of natural numbers from 1 to $x$.

Each node $i$ is associated with a set of variables. The algorithms are written in terms of rules, where each \textit{rule} for process $i$ is of the form $g\longrightarrow a_c$ where \arya{(1)} the \textit{guard} $g$ is a proposition over variables of 
some nodes which may include the variables of $i$ itself along with the variables of other nodes\arya{, and (2) the \textit{action} $a_c$ is a set of instructions that updates the variables of $i$ if $g$ is true}.
If any of the guards hold true for some node, we say that the node is \textit{enabled}. 

A \textit{move} is an event in which an enabled node updates its variables by executing an action $a_c$ corresponding to a guard $g$ that is true.
A \textit{round} is a sequence of events in which every node evaluates its guards at least once, and makes a move accordingly.

In some of the synchronization models, a selected node acts as a scheduler for the rest of the processes. A \textit{scheduler/daemon} is a node whose function is to choose one, some, or all nodes in a time step, throughout the execution, so that the selected nodes can evaluate their guards and take the corresponding action. A \textit{central scheduler} chooses only one node per time step. A \textit{distributed scheduler} one or more nodes, possibly arbitrarily, per time step. A \textit{synchronous scheduler} chooses all the nodes in each time step.

A global state $s\in S$ is represented as a vector such that $s[i]$ denotes the variables of node $i$, $s[i]$
itself is a vector of the variables of node $i$.

An algorithm $A$ is \textit{self-stabilizing} with respect to the subset $S_o$ of $S$ iff (1) \textit{convergence}: starting from an  arbitrary state, any sequence of computations of $A$ reaches a state in $S_o$, and (2) \textit{closure}: any computation of $A$ starting from $S_o$ always stays in $S_o$. 
We assume $S_o$ to be the set of \textit{optimal} states: the system is deemed converged once it reaches a state in $S_o$. $A$ is a \textit{silent} self-stabilizing algorithm if no node is enaabled once a state in $S_o$ is reached.

\subsection{Execution without Synchronization}

Typically, we view the \textit{computation} of an algorithm as a sequence of global states $\langle s_0, s_1, \cdots\rangle$, where $s_{t+1}, t\geq 0,$ is obtained by executing some action by one or more nodes (as decided by the scheduler) in $s_t$.  
For the sake of discussion, assume that only node $i$ executes in state $s_t$. 
The computation prefix til $s_{t}$ is $\langle s_0, s_1, \cdots, s_t\rangle$. The state that the system traverses to after $s_t$ is $s_{t+1}$.
Under proper synchronization, $i$ would evaluate its guards on the \textit{current} local states of its neighbours in $s_t$, and the resultant state $s_{t+1}$ can be computed accordingly.

To understand the execution in asynchrony, let $x(s)$ be the value of some variable $x$ 
in state $s$. 
If $i$ executes in asynchrony, then it views the global state that it is in to be $s'$, 
where $x(s')\in\{ x(s_0), x(s_1), \cdots, x(s_t) \}.$
In this case, $s_{t+1}$ is evaluated as follows.
If all guards in $i$ evaluate to false, then the system will continue to remain in state $s_t$, i.e., $s_{t+1} = s_{t}$.
If a guard $g$ evaluates to true then $i$ will execute its corresponding action $a_c$.
Here, we have the following observations:
(1) $s_{t+1}[i]$ is the state that $i$ obtains after executing an action in $s'$, and (2) $\forall j\neq i$, $s_{t+1}[j] = s_t[j]$.

\arya{
The model described in the above paragraph is \textit{arbitrary asynchrony}, in which a node can read old values of other nodes arbitrarily, requiring that if some information is sent from a node, it eventually reaches the target node.
In this paper, however, we are interested in \textit{asynchrony with monotonous read} (AMR) model. 
AMR model is arbitrary asynchrony with an additional restriction: when node $i$ reads the state of node $j$, the reads are monotonic, i.e., if $i$ reads a newer value of the state of $j$ then it cannot read an older value of $j$ at a later time. E.g., if the state of $j$ changes from $0$ to $1$ to $2$ and node $i$ reads the state of $j$ to be $1$ then its subsequent read will either return $1$ or $2$, it cannot return $0$. 
}

\subsection{Embedding a $\prec$-lattice in global states}

Let $s$ denote a global state, and let $s[i]$ denote the state of node $i$ in $s$. First, we define a total order $\prec_l$; all local states of a node $i$ are totally ordered under $\prec_l$. 
Using $\prec_l$, we define a partial order $\prec_g$ among global states as follows. 

We say that $s \prec_g s^\prime$ iff $(\forall i: s[i]=s'[i]\lor s[i]\prec_l s'[i]) \land (\exists i:s[i]\prec_ls'[i])$.
Also, $s=s'$ iff $\forall i~s[i] = s'[i]$. 
For brevity, we use $\prec$ to denote $\prec_l$ and $\prec_g$: $\prec$ corresponds to $\prec_l$ while comparing local states, and $\prec$ corresponds to $\prec_g$ while comparing global states. 
We also use the symbol `$\succ$' which is such that $s\succ s'$ iff $s' \prec s$.
Similarly, we use symbols `$\preceq$' and `$\succeq$'; e.g., $s\preceq s'$ iff  $s=s' \vee s \prec s'$.
We call the lattice, formed from such partial order, a \textit{$\prec$-lattice}.

\begin{definition}\label{definition:<-lattice}
    \textbf{{\boldmath$\prec$}-\textit{lattice}}. 
    Given a total relation $\prec_l$ that orders the states visited by a node $i$ (for each $i$) the $\prec$-lattice corresponding to $\prec_l$ is defined by the following partial order:
    $s \prec_g s'$ iff $(\forall i \ \ s[i] \preceq_l s'[i]) \wedge (\exists i \ \ s[i] \prec_l s'[i])$.
\end{definition}

In the $\prec$-lattice discussed above, we can define the meet and join of two states in the standard way: the meet (respectively, join), of two states $s_1$ and $s_2$ is a state $s_3$ where $\forall i, s_3[i]$ is equal to $min(s_1[i], s_2[i])$ (respectively, $max(s_1[i], s_2[i])$), where $\min(x, y) = \min(y, x)=x$ iff $(x\prec_l y \lor x=y)$, and 
 $\max(x, y) = \max(y, x)=y$ iff $(y\succ_l x \lor y=x)$.
 For $s_1$ and $s_2$, their meet (respectively, join) has paths to (respectively, is reachable from) both $s_1$ and $s_2$.

A $\prec$-lattice, embedded in the state space, is useful for permitting the algorithm to execute asynchronously.
Under proper constraints on the structure of $\prec$-lattice, convergence can be ensured. 

\arya{
\subsection{Lattice-Linear Problems and Algorithms}

Next, we discuss \textit{lattice-linear problems}, i.e., the problems where the description of the problem statement creates the lattice structure automatically. Such problems can be represented by a predicate under which the states in $S$ form a lattice. These problems include stable (man-optimal) marriage problem, market clearing price and others. These problems have been discussed in \cite{Garg2020, Garg2021, Garg2022}. 

A \textit{lattice-linear problem} $P$ can be represented by a predicate $\mathcal{P}$ such that if any node $i$ is violating $\mathcal{P}$ in a state $s$, then it must change its state. Otherwise, the system will not satisfy $\mathcal{P}$.
Let $\mathcal{P}(s)$ be true iff state $s$ satisfies $\mathcal{P}$. A node violating $\mathcal{P}$ in $s$ is called an \textit{\imped} node (an \textit{impediment} to progress if does not execute, \textit{indispensable} to execute for progress). Formally,

\begin{definition}\label{definition:impedensable-node}\cite{Garg2020} \textbf{\textit{\Imped node.}} $\textsc{\Imped}(i,s,\mathcal{P})\equiv \lnot \mathcal{P}(s)$ $\land$ $(\forall s'\succ s:s'[i]=s[i]\Rightarrow\lnot \mathcal{P}(s'))$. \end{definition}

\noindent\textbf{\textit{Remark}}: We use the term `impedensable' as a combination of the English words impediment and indispensable.
The term `\imped' is similar to the notion of a node being \textit{forbidden} introduced in \cite{Garg2020}. This word itself comes from predicate detection background \cite{Chase1995}. 
We changed the notation to avoid the misinterpretation of the English meaning of the word `forbidden'.

If a node $i$ is \imped in state $s$, then in any state $s'$ such that $s'\succ s$, if the state of $i$ remains the same, then the algorithm will not converge.
Thus, predicate $\mathcal{P}$ induces a total order among the local states visited by a node, for all nodes. Consequently, the discrete structure that gets induced among the global states is a $\prec$-lattice, as described in \Cref{definition:<-lattice}. 
We say that $\mathcal{P}$, satisfying \Cref{definition:impedensable-node}, is \textit{lattice-linear} with respect to that $\prec$-lattice.

There can be multiple arbitrary lattices that can be induced among the global states. A system cannot guarantee convergence while traversing an arbitrary lattice. To guarantee convergence, we design the predicate $\mathcal{P}$ such that it fulfils some properties, and guarantees reachability to an optimal state. $\mathcal{P}$ is used by the nodes to determine if they are \imped, using \Cref{definition:impedensable-node}.
Thus, any suboptimal global state has at least one \imped node.

\begin{definition}\cite{Garg2020}\textbf{\textit{Lattice-linear predicate.}}
    $\mathcal{P}$ is a lattice-linear predicate with respect to a $\prec$-lattice induced among the global states iff $\forall s\in S: \lnot\mathcal{P}(s) \Rightarrow \exists i:\textsc{\Imped}(i,s,\mathcal{P})$.
\end{definition}

Now we complete the definition of lattice-linear problems. In a lattice-linear problem $P$, given any suboptimal global state $s$, we can identify all and the only nodes which cannot retain their local states. 
$\mathcal{P}$ is thus designed conserving this nature of the subject problem $P$.

\begin{definition}\label{definition:ll-problem}
    \textbf{Lattice-linear problems}.
    A problem $P$ is lattice-linear 
    iff there exists a predicate $\mathcal{P}$ and a $\prec$-lattice such that
    
    \begin{itemize}
        \item $P$ is deemed solved iff the system reaches a state where $\mathcal{P}$ is true,
        \item $\mathcal{P}$ is lattice-linear with respect to the $\prec$-lattice induced among the states in $S$, i.e., $\forall s: \neg \mathcal{P}(s) \Rightarrow \exists i:\textsc{\Imped}(i,s,\mathcal{P})$, and
        \item $\forall s:(\forall i:\textsc{\Imped}(i,s,\mathcal{P})\Rightarrow (\forall s':\mathcal{P}(s')\Rightarrow s'[i]\neq s[i]))$.
    \end{itemize}
\end{definition}

\noindent\textbf{\textit{Remark:}} A $\prec$-lattice, induced under $\mathcal{P}$, allows asynchrony because if a node, reading old values, reads the current state $s$ as $s'$, then $s'\prec s$. So $\lnot\mathcal{P}(s')\Rightarrow \lnot\mathcal{P}(s)$ because $\textsc{\Imped}(i,s',\mathcal{P})$ and $s'[i]=s[i]$.

\begin{definition}\label{definition:ssll-problem}
    \textbf{Self-stabilizing lattice-linear predicate}.
    Continuing from \Cref{definition:ll-problem},
    $\mathcal{P}$ is a self-stabilizing lattice-linear predicate if and only if the supremum of the lattice, that $\mathcal{P}$ induces, is an optimal state.
\end{definition}

\noindent Note that a self-stabilizing lattice-linear predicate $\mathcal{P}$ can also be true in states other than the supremum of the $\prec$-lattice. 

Certain problems are \textit{non-lattice-linear problems}. In such problems, there are instances in which the \imped nodes cannot be determined naturally, i.e., in those instances
$\exists s :\lnot\mathcal{P}(s) \wedge   (\forall i : \exists s' : \mathcal{P}(s')\land s[i]=s'[i]$).
For such problems, $\prec$-lattices can be induced algorithmically, through \textit{lattice-linear algorithms}.

\begin{definition}\label{definition:ll-algos}\textbf{Lattice-linear algorithms (LLA)}.
    Algorithm $A$ is an LLA for a problem $P$, iff there exists a predicate $\mathcal{P}$ and $A$ induces a $\prec$-lattice among the states of $S_1, ..., S_w \subseteq S (w\geq 1)$, such that
    \begin{itemize}
        \item State space $S$ of $P$ contains mutually disjoint lattices, i.e.
        \begin{itemize}
            \item $S_1, S_2, \cdots, S_w\subseteq S$ are pairwise disjoint.
            \item $S_1 \cup S_2 \cup \cdots \cup S_w$ contains all the reachable states (starting from a set of initial states, if specified; if an arbitrary state can be an initial state, then $S_1 \cup S_2 \cup \cdots \cup S_w=S$).
        \end{itemize}
        \item Lattice-linearity is satisfied in each subset under $\mathcal{P}$, i.e., 
        \begin{itemize}
            \item $P$ is deemed solved iff the system reaches a state where $\mathcal{P}$ is true
            \item $\forall k$, $1 \leq k \leq w$, 
            $\mathcal{P}$ is lattice-linear with respect to the partial order induced in $S_k$ by $A$, i.e., $\forall s\in S_k: \lnot\mathcal{P}(s) \Rightarrow \exists i \ \
            \textsc{\Imped}(i,s,\mathcal{P})$.
        \end{itemize}
    \end{itemize}
\end{definition}

\begin{definition}\textbf{Self-stabilizing LLA}.
    Continuing from \Cref{definition:ll-algos}, $A$ is self-stabilizing only if $S_1 \cup S_2 \cup \cdots \cup S_w=S$ and $\forall k:1\leq k\leq w$, the supremum of the lattice induced among the states in $S_k$ is optimal.
\end{definition}

In this paper, we study algorithms that induce one or more lattices in a subset of state space; they also guarantee that the system reaches a state in one of the induced lattices from an arbitrary state.
}

\section{Service Demand based Minimal Dominating Set}\label{section:sdmds-algorithm}

In this section, 
we introduce a generalization of the minimal dominating set (\mds) problem (\Cref{subsection:pd:ds}),
the service demand based minimal dominating set (\sdmds) problem, and describe an algorithm to solve it (\Cref{subsection:ds-general-algorithm}). 

\subsection{Problem description}\label{subsection:pd:ds}

The \sdmds problem, a generalization of \mds, \arya{is a simulation, on an arbitrary graph $G$, in which all nodes have some demands to be fulfilled and they offer some services. If a node $i$ is in the dominating set then it can not only serve all its own demands $D_i$, but also offer services from, its set of services $S_i$, to its neighbours. If $i$ is not in the dominating set, then it is considered dominated only if each of its demands in $D_i$ is being served by at least one of its neighbours that is in the dominating set.}

\begin{definition}\textbf{Service demand based minimal dominating set problem (SDMDS)}.
    In the \textit{service demand based minimal dominating set} problem, the input is a graph $G$ and a set of services $S_i$ and a set of demands $D_i$ for each node $i$ in $G$; the task is to compute a minimal set $\mathcal{D}$ such that for each node $i$,
    \begin{enumerate}
        \item either $i\in \mathcal{D}$, or
        \item for each demand $d$ in $D_i$, there exists at least one node $j$ in $Adj_i$ such that $d\in S_j$ and $j\in \mathcal{D}$.
    \end{enumerate}
\end{definition}

In the above generalization of the \mds problem, if all nodes have same set $X$ as their services and demands, i.e., $\forall i: S_i=X$ and $D_i=X$, then it is equivalent to \mds.

In the following subsection, we present a self-stabilizing algorithm for the minimal SDMDS problem.
Each node $i$ is associated with variable $i[st]$ with domain $\{IN, OUT\}$. $i[st]$ defines the state of $i$. We define $\mathcal{D}$ to be the set $\{i\in V(G): i[st]=IN\}$. 

\subsection{Algorithm for SDMDS problem}\label{subsection:ds-general-algorithm}

The list of constants, provided with the input, is in the following table.

\begin{center}
    \begin{tabular}{|l|l|}
        \hline
        Constant & What it stands for\\
        \hline
        $D_i$ & the set of demands of node $i$.\\
        $S_i$ & the set of services provided by node $i$.\\
        \hline
    \end{tabular}
\end{center}

The macros that we utilize are described in the following table.
Recall that $\mathcal{D}$ is the set of nodes which currently have the state as $IN$. A node $i$ is \textit{addable} if there is at least one demand of $i$ that is not being serviced by any neighbour of $i$ that is in $\mathcal{D}$.
A node $i$ is \textit{removable} if $\mathcal{D}\setminus\{i\}$ is also a dominating set given that $\mathcal{D}$ is a dominating set.
The \textit{dominators} of $i$ are the nodes that are (possibly) dominating node $i$: if some node $j$ is in \textsc{Dominators-Of}($i$), then there is at least one demand $d\in D_i$ such that $d\in S_j$.
$i$ is \textit{\imped} if $i$ is removable and there is no node $k$ that is removable and is of an ID higher than $i$, such that $k$ and $i$ are able to serve for some common node $j$.

\begin{center}
    \begin{tabular}{|l|}
        \hline
        $\mathcal{D}\equiv \{i\in V(G): i[st] =$ $IN\}$.\\
        \textsc{Addable-SDMDS}($i)\equiv i[st]=OUT\land$\\
        \quad\quad\quad\quad $(\exists d\in D_i, \forall j\in Adj_i: d\not\in S_j\lor j[st]=OUT)$.\\
        \textsc{Removable-SDMDS}$ (i)\equiv (\forall d \in D_i : (\exists j \in Adj_i: d \in S_j \land j[st]=$ $IN))\land$\\
        \quad\quad\quad\quad $(\forall j \in Adj_i,\forall~d \in D_j:d\in S_i\Rightarrow$\\
        \quad\quad\quad\quad $(\exists k \in Adj_j, k\neq i:(d \in S_k \land k[st] =$ $IN)))$.\\
        \textsc{Dominators-Of}($i)\equiv$\\
        \quad\quad\quad\quad $\{j\in Adj_i, j[st]=IN:\exists d\in D_i:d\in S_j\}\cup\{i\}$\quad if $i[st]=IN$\\
        \quad\quad\quad\quad $\{j\in Adj_i, j[st]=IN:\exists d\in D_i:d\in S_j\}$\quad \quad \quad \quad otherwise.\\
        \textsc{Impedensable-SDMDS}$(i)\equiv i[st]=IN\land$ \textsc{Removable-DS}$(i)\land$\\
        \quad\quad\quad\quad $(\forall j \in Adj_i,\forall~d \in D_j:d\in S_i\Rightarrow$\\
        \quad\quad\quad\quad $((\forall k \in$ \textsc{Dominators-Of}$(j)$, $k\neq i:(d \in S_k \land k[st] =$ $IN))\Rightarrow$\\
        \quad\quad\quad\quad $(k[id]<i[id]\lor\lnot$\textsc{Removable-DS}$(k))))$.\\
        \hline
    \end{tabular}
\end{center}

The general idea our algorithm is as follows. 
\begin{enumerate}
    \item A node enters the dominating set unconditionally if it is addable.
    This ensures that $G$ enters a state where the set of nodes in $\mathcal{D}$ form a (possibly non-minimal) dominating set. 
    If $\mathcal{D}$ is a dominating set, we say that the corresponding state is a \textit{feasible} state. 
    \item While entering the dominating set is not lattice-linear, the instruction governing the leaving of the dominating set is lattice-linear. 
    Node $i$ leaves the dominating set iff it is \imped. 
    Specifically, if $i$ serves for a demand $d$ in $D_j$ where $j \in Adj_i$ and the same demand is also served by another node $k$ ($k\in Adj_j$) then $i$ leaves only if (1) $k[id] < i[id]$ or (2) $k$ is not removable. 
    This ensures that if some demand $d$ of $D_j$ is satisfied by both $i$ and $k$ both of them cannot leave the dominating set simultaneously. 
    This ensures that $j$ will remain dominated. 
\end{enumerate} 
Thus, the rules for \Cref{algorithm:rules-ds} are as follows: 

\begin{algorithm}\label{algorithm:rules-ds}Rules for node $i$.
    \begin{center}
        \begin{tabular}{|l|}
            \hline
            \textsc{Addable-SDMDS}$(i)\longrightarrow i[st]=IN$.\\
            \textsc{Impedensable-SDMDS}$(i)\longrightarrow i[st]=OUT$.\\
            \hline
        \end{tabular}
    \end{center}
\end{algorithm}

We decompose  \Cref{algorithm:rules-ds} into two parts: (1) \Cref{algorithm:rules-ds}.1, that only consists of first guard and action of \Cref{algorithm:rules-ds} and (2) \Cref{algorithm:rules-ds}.2, that only consists of the second guard and action of \Cref{algorithm:rules-ds}. We use this decomposition in the following section of this paper to relate the algorithm to eventual lattice-linearity.

\section{Lattice-Linear Characteristics of \Cref{algorithm:rules-ds}}\label{section:sdmds-lattice-linear}

In this section, we analyze the characteristics of \Cref{algorithm:rules-ds} to demonstrate that it is eventually lattice-linear. We proceed as follows.
In \Cref{subsection:ds-propositions}, we state the propositions which define the feasible and optimal states of the SDMDS problem, along with some other definitions. In \Cref{subsection:guarantee-feasible}, we show that $G$ reaches a state where it manifests a (possibly non-minimal) dominating set.
In \Cref{subsection:ds-action2}, we show that after when $G$ reaches a feasible state, \Cref{algorithm:rules-ds} behaves like a lattice-linear algorithm. 
In \Cref{subsection:termination}, we show that when $\mathcal{D}$ is a minimal dominating set, no nodes are enabled. 
In \Cref{subsection:ds-eventual}, we argue that because there is a bound on interference between \Cref{algorithm:rules-ds}.1 and \Cref{algorithm:rules-ds}.2 even when the nodes read old values, \Cref{algorithm:rules-ds} is an eventually lattice-linear self-stabilizing (ELLSS) algorithm.
In \Cref{subsection:time-space-complexity-analysis}, we study the time and space complexity attributes of \Cref{algorithm:rules-ds}.

\subsection{Propositions stipulated by the SDMDS problem}\label{subsection:ds-propositions}

The SDMDS problem stipulates that the nodes whose state is $IN$ must collectively form a dominating set. We represent this proposition as $\mathcal{P}_{sdmds}^\prime$.
\begin{center}
    $\mathcal{P}_{sdmds}^\prime(\mathcal{D}) \equiv \forall i\in V(G):(i\in \mathcal{D}\lor (\forall d\in D_i,\exists j\in Adj_i: (d\in S_j\land j\in \mathcal{D})))$.
\end{center}
The SDMDS problem stipulates an additional condition that $\mathcal{D}$ should be a minimal dominating set. We represent this proposition as $\mathcal{P}_{sdmds}$.

\begin{center}
    $\mathcal{P}_{sdmds}(\mathcal{D})\equiv \mathcal{P}^\prime_{sdmds}(\mathcal{D})\land(\forall i\in \mathcal{D}, \lnot\mathcal{P}_{sdmds}^\prime(\mathcal{D}\setminus\{i\}))$.
\end{center}

If $\mathcal{P}_{sdmds}^\prime(\mathcal{D})$ is true, then $G$ is in a \textit{feasible} state. And, if $\mathcal{P}_{sdmds}(\mathcal{D})$ is true, then $G$ is in an \textit{optimal} state. 

Based on the above definitions, we define two scores with respect to the global state, \textsc{Rank} and \textsc{Badness}.
\textsc{Rank} determines the number of nodes needed to be added to $\mathcal{D}$ to change $\mathcal{D}$ to a dominating set. 
\textsc{Badness} determines the number of nodes that are needed to be removed from $\mathcal{D}$ to make it a minimal dominating set, given that $\mathcal{D}$ is a (possibly non-minimal) dominating set.

\begin{definition}
    $\textsc{Rank}(\mathcal{D})\equiv\min\{|\mathcal{D}^\prime|-|\mathcal{D}|:\mathcal{P}^\prime_d(\mathcal{D}^\prime)\land \mathcal{D}\subseteq \mathcal{D}^\prime\}$.
\end{definition}

\begin{definition}
    $\textsc{Badness}(\mathcal{D})\equiv\max\{|\mathcal{D}|-|\mathcal{D}^\prime|:\mathcal{P}^\prime_d(\mathcal{D}^\prime)\land \mathcal{D}^\prime\subseteq \mathcal{D}\}$.
\end{definition}

\subsection{Guarantee to Reach a Feasible State by \Cref{algorithm:rules-ds}}\label{subsection:guarantee-feasible}

We show that under \Cref{algorithm:rules-ds}, $G$ is guaranteed to reach a feasible state. 

\begin{lemma}\label{lemma:d-not-ds}
    Let $t.\mathcal{D}$ be the value of $\mathcal{D}$ at the beginning of round $t$. 
If $t.\mathcal{D}$ is not a dominating set then $(t+1).\mathcal{D}$ is a dominating set.
\end{lemma}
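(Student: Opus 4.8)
The plan is to show that after one complete round, every node that was undominated at the start of the round either enters $\mathcal{D}$ itself or becomes dominated, so that $(t+1).\mathcal{D}$ satisfies $\mathcal{P}_{sdmds}^\prime$. The key observation is that the \textsc{Addable-SDMDS} rule fires \emph{unconditionally} on any addable node --- its guard does not depend on whether neighbours are removable or on any ID comparison, and its action only ever moves a node from $OUT$ to $IN$. So first I would characterize what it means for $t.\mathcal{D}$ to fail to be a dominating set: there exists a node $i$ with $i[st]=OUT$ and some demand $d\in D_i$ such that no neighbour $j\in Adj_i$ with $d\in S_j$ has $j[st]=IN$. By the definition of \textsc{Addable-SDMDS}, this is exactly the condition that $i$ is addable. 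Hence every undominated node is enabled via the first rule at the start of the round.

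Next I would use the definition of a \emph{round}: every node evaluates its guards at least once and makes a move accordingly during round $t$. Consider any node $i$ that is addable (undominated) at the beginning of the round. When $i$ is scheduled to evaluate its guards within the round, I must argue that it is \emph{still} addable at that moment, so that it actually moves to $IN$. The crucial monotonicity fact is that during a single round the \textsc{Addable-SDMDS} rule can only add nodes to $\mathcal{D}$, and the only other rule, \textsc{Impedensable-SDMDS}, has a guard requiring \textsc{Removable-SDMDS}$(i)$, which in turn requires that every demand of $i$ is already served by an $IN$ neighbour. I would argue that a node cannot be removable while the state is still infeasible in the relevant local neighbourhood, and more to the point, that the set of nodes servicing $i$'s demand $d$ can only grow (not shrink) up to the point where $i$ itself becomes dominated. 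Therefore $i$ remains addable until it is scheduled, at which point it executes and enters $\mathcal{D}$; once $i[st]=IN$, node $i$ is trivially dominated (condition 1 of the SDMDS definition).

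The main obstacle I anticipate is the interaction between the two rules within the same round under the AMR (asynchrony with monotonous read) model --- specifically ruling out the bad scenario in which a removal via \textsc{Impedensable-SDMDS} un-dominates a node $i$ that some neighbour was relying on, thereby keeping $i$ addable forever or creating a fresh undominated node after $i$ has already been examined. To handle this cleanly I would lean on the structure of the guards: \textsc{Impedensable-SDMDS}$(i)$ requires \textsc{Removable-SDMDS}$(i)$, whose second clause guarantees that for every neighbour $j$ relying on $i$ for a demand $d$, there is \emph{another} $IN$ node $k\neq i$ in $Adj_j$ servicing $d$. Thus a removal never destroys domination of any node, so removals cannot create new addable nodes. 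Consequently, once a node is dominated during the round it stays dominated through the end of the round, and combined with the fact that every initially-undominated node becomes $IN$ when scheduled, every node is dominated at the start of round $t+1$.

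I would close by assembling these pieces: within round $t$, each of the (finitely many) nodes that was undominated becomes $IN$ when it takes its move, no removal un-dominates any node, and therefore at the beginning of round $t+1$ every node $i$ satisfies either $i\in(t+1).\mathcal{D}$ or has all its demands served by $IN$ neighbours, i.e. $\mathcal{P}_{sdmds}^\prime((t+1).\mathcal{D})$ holds and $(t+1).\mathcal{D}$ is a dominating set. The argument that removals preserve domination (the second clause of \textsc{Removable-SDMDS}) is the linchpin and is where I would spend the most care, since it is what prevents the round from oscillating between adding and removing the same coverage.
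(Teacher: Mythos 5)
Your overall skeleton is the same as the paper's: (i) a node left undominated by $t.\mathcal{D}$ is exactly an addable node, and during round $t$ it either executes the first rule and enters $\mathcal{D}$ or has meanwhile become dominated by neighbours that moved in; (ii) removals during the round never un-dominate any node; hence $(t+1).\mathcal{D}$ is a dominating set. The gap is in step (ii), which you yourself call the linchpin. You justify ``a removal never destroys domination'' purely from the second clause of \textsc{Removable-SDMDS}, i.e., from the existence of an alternative $IN$ server $k\neq i$ for every demand that the leaving node $i$ serves. That clause only guarantees the alternative exists \emph{in the view under which $i$ evaluated its guard}. Consider two nodes $i$ and $k$, both $IN$, that are the only servers of a demand $d\in D_j$ of a common neighbour $j$: each satisfies \textsc{Removable-SDMDS} because the other is $IN$, so under a distributed scheduler (simultaneous moves), or under the AMR reads you explicitly set out to handle (where $k$ may still read $i$ as $IN$ after $i$ has left), both can leave in round $t$ and $j$ becomes undominated. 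So the \textsc{Removable} clause alone does not rule out precisely the bad scenario you identified.

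What actually closes this hole is the ID-comparison clause of \textsc{Impedensable-SDMDS}: $i$ may leave only if every other removable dominator serving a common demand has a lower ID than $i$. Since IDs are distinct, of two mutually-backup removable servers only the higher-ID one can be \imped; the lower-ID one is blocked either because it still sees the higher-ID node as a removable $IN$ alternative (the ID test fails) or because, reading the fresh value, it is no longer removable at all. This is exactly the argument the paper makes for \Cref{lemma:ds-addition} (``node $b$ cannot leave the dominating set until $M$ leaves''), and your proof becomes correct once the justification of step (ii) is replaced by it. A minor additional imprecision: your closing summary asserts every initially undominated node ``becomes $IN$ when it takes its move,'' whereas (as the paper's proof and your own second paragraph note) it may instead already be dominated when scheduled; the conclusion should uniformly read ``enters $\mathcal{D}$ or is dominated by then.''
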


\begin{proof}
    Let $i$ be a node such that $i\in t.\mathcal{D}$ and $i\not\in(t+1).\mathcal{D}$, {i.e., $i$ leaves the dominating set in round $t$}. This means that $i$ remains dominated and all nodes in $Adj_i$ remain dominated, even when $i$ is removed. This implies that $i$ will not reduce the feasibility of $t.D$; it will not increase the value of \textsc{Rank}.
    
    Now let $\ell$ be a node such that $\ell\not \in t.\mathcal{D}$ which is addable when it evaluates its guards in round $t$. This implies that $\exists~d\in D_\ell$ such that $d$ is not present in $S_j$ for any $j\in Adj_\ell$ that is in the dominating set. According to the algorithm, the guard of the second action is true for $\ell$. This implies that $\ell[st]$ will be set to $IN$. 
    
    It can also be possible for the node $\ell$ that it is not addable when it evaluates its guards in round $t$. This may happen if some other nodes around $\ell$ already decided to move to $\mathcal{D}$, and as a result $\ell$ is now dominated. Hence $\ell\not\in(t+1).\mathcal{D}$ and we have that $\ell$ is dominated at round $t+1$.
    
    Therefore, we have that $(t+1).\mathcal{D}$ is a dominating set, which may or may not be minimal.
\end{proof}

From \Cref{lemma:d-not-ds}, we have that if at the beginning of some round, $G$ is in a state where $\textsc{Rank} >0$, then by the end of that round, \textsc{Rank} will be $0$.

\subsection{Lattice-Linearity of \Cref{algorithm:rules-ds}.2}\label{subsection:ds-action2}

In the following lemma, we show that \Cref{algorithm:rules-ds}.2 is lattice-linear.

\begin{lemma}\label{lemma:ds-addition}
    If $t.\mathcal{D}$ is a non-minimal dominating set then under \Cref{algorithm:rules-ds} (more specifically, \Cref{algorithm:rules-ds}.2),
    there exists at least one node such that $G$ cannot reach a minimal dominating set until that node is removed from the dominating set.
\end{lemma}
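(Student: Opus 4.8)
The plan is to prove the lemma by exhibiting an explicit \imped node (in the sense of \Cref{definition:impedensable-node}) for the predicate $\mathcal{P}_{sdmds}$, with respect to the $\prec$-lattice that \Cref{algorithm:rules-ds}.2 induces. Since \Cref{algorithm:rules-ds}.2 only moves nodes from $IN$ to $OUT$, I set $IN\prec_l OUT$ for every node, so that the states reachable above $s=t.\mathcal{D}$ are exactly those whose dominating set is a subset of $t.\mathcal{D}$. Because $t.\mathcal{D}$ is a dominating set, $\mathcal{P}'_{sdmds}(t.\mathcal{D})$ holds; because it is non-minimal, $\mathcal{P}_{sdmds}(t.\mathcal{D})$ fails, and unwinding the definition of $\mathcal{P}_{sdmds}$ yields a node $i\in\mathcal{D}$ with $\mathcal{P}'_{sdmds}(\mathcal{D}\setminus\{i\})$, i.e., \textsc{Removable-SDMDS}$(i)$ holds. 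Hence the set $R$ of removable nodes is nonempty.

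Next I would choose the witness $i^\ast$ to be the removable node of maximum $id$, and verify that \textsc{Impedensable-SDMDS}$(i^\ast)$ holds. The first two conjuncts ($i^\ast[st]=IN$ and removability) hold by the choice of $i^\ast$. For the tie-breaking conjunct, take any $j\in Adj_{i^\ast}$ and $d\in D_j$ with $d\in S_{i^\ast}$, and any other in-set dominator $k$ of $j$ that serves $d$. Either $k$ is removable, in which case $k\in R$ and $k[id]<i^\ast[id]$ by maximality, or $k$ is not removable; in both cases the disjunction $k[id]<i^\ast[id]\lor\lnot\textsc{Removable-SDMDS}(k)$ is satisfied. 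Thus $i^\ast$ is \imped and is enabled under \Cref{algorithm:rules-ds}.2.

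The crux is to upgrade this to the statement of \Cref{definition:impedensable-node}, namely that $\mathcal{P}_{sdmds}$ cannot hold at any $s'\succ s$ that still has $i^\ast[st]=IN$; equivalently, that the set stays non-minimal as long as $i^\ast$ stays $IN$. I would argue by induction along any reachable chain $s=s_0\prec s_1\prec\cdots$ in which $i^\ast$ remains $IN$ that \textsc{Removable-SDMDS}$(i^\ast)$ keeps holding; since a node that is simultaneously $IN$ and removable certifies non-minimality, this finishes the argument. Removability of $i^\ast$ can be lost only if a node it relies on leaves. For the neighbour-coverage conjunct of \textsc{Removable-SDMDS}$(i^\ast)$, such a node is a competitor $k$ for a shared pair $(j,d)$; by maximality $k[id]<i^\ast[id]$, and since $i^\ast$ is then a higher-$id$ removable competitor of $k$ for $(j,d)$, the tie-breaking clause of \textsc{Impedensable-SDMDS}$(k)$ fails, so $k$ is not \imped and never moves. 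For the self-coverage conjunct, any departing neighbour of $i^\ast$ was itself \imped, hence removable, and therefore, by its own neighbour-coverage condition, left behind a replacement server for the affected demand of $i^\ast$; so feasibility is preserved and $i^\ast$ remains dominated by others. Both conjuncts of \textsc{Removable-SDMDS}$(i^\ast)$ thus persist, completing the induction.

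I expect the inductive step of the third paragraph to be the main obstacle, since it is where one must rule out the scenario in which $i^\ast$ quietly becomes non-removable as its neighbours vacate the set. The whole argument hinges on the observation that the $id$-based tie-breaking built into \textsc{Impedensable-SDMDS} is exactly what forbids two competing removers of a common demand from leaving together: the lower-$id$ competitor is blocked precisely while the higher-$id$ one is still present and removable. Stating this mutual-exclusion invariant carefully, and noting that the reads in the AMR model only return states $s'\preceq s$ (so they cannot fabricate a departure that has not happened), is what makes the persistence of \textsc{Removable-SDMDS}$(i^\ast)$ rigorous and thereby establishes lattice-linearity of \Cref{algorithm:rules-ds}.2.
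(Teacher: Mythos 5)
Your proposal is correct and follows essentially the same route as the paper's proof: both take the highest-ID removable node as the witness and verify \textsc{Impedensable-SDMDS} for it via the ID-based tie-break, which simultaneously blocks lower-ID competitors of a shared demand from leaving first. The only real difference is that your third paragraph proves, by induction along executions of \Cref{algorithm:rules-ds}.2, that this node stays removable (hence $\mathcal{D}$ stays non-minimal) as long as it remains $IN$ --- a persistence claim the paper's proof asserts without detailed argument --- and your induction is sound once one adds the easy observation that removals can only destroy, never create, removability, so no removable competitor with a higher ID can appear in later states.
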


\begin{proof}
Since $\mathcal{D}$ is a dominating set, the first guard is false for all nodes in $G$.

Since $\mathcal{D}$ is not minimal, there exists at least one node that must be removed in order to make $\mathcal{D}$ minimal. Let $S^\prime$ be the set of nodes which are removable. 
Let $M$ be some node in $S^\prime$. If $M$ is not serving any node, then \textsc{Impedensable-SDMDS}($M$) is trivially true. Otherwise there exists at least one node $j$ which is served by $M$, that is, $\exists d\in D_j:d\in S_M$. We study two cases which are as follows:
(1) for some
node $j$ served by $M$, there does not exist another node $b \in S^\prime$ which serves $j$, and (2) for any node $b \in S^\prime$ such that $M$ and $b$ serve some common node $j$, $b[id]<M[id]$. 

In the first case, $M$ cannot be removed because \textsc{Impedensable-SDMDS}($M$) is false and, hence, $M$ cannot be in $S^\prime$, thereby leading to a contradiction.
In the second case, \textsc{Impedensable-SDMDS}($M$) is true and \textsc{Impedensable-SDMDS}($b$) is false since $b[id] < M[id]$. Thus, node $b$ cannot leave the dominating set until $M$ leaves. In both the cases, we have that $j$ stays dominated.

Since ID of every node is distinct, we have that there exists at least one node $M$ for which \textsc{Impedensable-SDMDS}($M$) is true. For example, \textsc{Impedensable-SDMDS} is true for the node with the highest ID in $S^\prime$; $G$ cannot reach a minimal dominating set until $M$ is removed from the dominating set.
\end{proof}

From \Cref{lemma:ds-addition}, it follows that \Cref{algorithm:rules-ds}.2 satisfies the condition of lattice-linearity as described in \Cref{section:preliminaries}. It follows that if we start from a state where $\mathcal{D}$ is a (possibly non-minimal) dominating set and execute \Cref{algorithm:rules-ds}.2 then it will reach  a state where $\mathcal{D}$ is a minimal dominating set even if nodes are executing with old information about others. Next, we have the following result which follows from \Cref{lemma:ds-addition}. 

\begin{lemma}\label{lemma:ds-removal}
    Let $t.\mathcal{D}$ be the value of $\mathcal{D}$ at the beginning of round $t$. 
    If $t.\mathcal{D}$ is a non-minimal dominating set then $|(t+1).\mathcal{D}| \leq |t.\mathcal{D}|-1$, and $(t+1).\mathcal{D}$ is a dominating set.
\end{lemma}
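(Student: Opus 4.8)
The plan is to establish the statement as two separate claims: that $\mathcal{D}$ never grows during round $t$ (so the cardinality is non-increasing and $(t+1).\mathcal{D}$ stays a dominating set), and that at least one node leaves $\mathcal{D}$ during the round (so the cardinality strictly decreases). First I would observe that since $t.\mathcal{D}$ is a dominating set, $\mathcal{P}_{sdmds}^\prime(t.\mathcal{D})$ holds and hence \textsc{Addable-SDMDS}$(i)$ is false for every node $i$ at the start of the round. I would then argue that this remains the case throughout the round: the only action that can fire is \Cref{algorithm:rules-ds}.2, and a node $i$ executes it only when \textsc{Impedensable-SDMDS}$(i)$, hence \textsc{Removable-SDMDS}$(i)$, holds; by definition this guarantees that both $i$ and every neighbour that $i$ serves retain a dominator after $i$ goes $OUT$. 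Consequently no node ever becomes uncovered, $\mathcal{P}_{sdmds}^\prime$ is an invariant of the round, no node becomes addable, and thus $\mathcal{D}$ can only shrink, giving $|(t+1).\mathcal{D}| \leq |t.\mathcal{D}|$ with $(t+1).\mathcal{D}$ a dominating set.

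For the strict decrease I would invoke \Cref{lemma:ds-addition}: since $t.\mathcal{D}$ is a non-minimal dominating set, there is an impedensable node, concretely the node $M$ of highest ID among the removable nodes. The key sub-claim is that $M$ stays impedensable until it moves. Because only \Cref{algorithm:rules-ds}.2 fires, coverage only decreases during the round, so no node can gain removability and in particular no node of ID exceeding $M[id]$ can become removable; moreover, any node $k$ sharing a served demand with $M$ is removable only with $k[id] < M[id]$, so $k$ is not impedensable while $M$ is $IN$ and removable, and therefore $k$ does not leave, which in turn keeps $M$ removable. Since every node evaluates its guards at least once during the round, $M$ either has already left or will find itself impedensable and execute \Cref{algorithm:rules-ds}.2; either way at least one node leaves, which yields $|(t+1).\mathcal{D}| \leq |t.\mathcal{D}| - 1$.

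The main obstacle is the dominating-set-preservation argument when several nodes leave within the same round: I must rule out that two nodes $i$ and $k$ both serving a common demand $d$ of a common neighbour $j$ depart and jointly uncover $j$. This is exactly what the ID-based tiebreak inside \textsc{Impedensable-SDMDS} prevents, since of two such removable nodes only the higher-ID one is impedensable, and once it leaves the other ceases to be removable; hence at the instant each departure occurs the leaving node is genuinely removable and $j$ still has a dominator. A short induction on the sequence of moves within the round formalizes this, and it is the same mechanism that underlies \Cref{lemma:ds-addition}. The additional robustness to nodes reading stale values of their neighbours is a separate concern, handled through the limited-interference property in \Cref{subsection:ds-eventual}, and I would not need it for the round-level claim established here.
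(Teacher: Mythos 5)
Your proposal is correct and follows essentially the same route as the paper's own proof: no node is added because \textsc{Addable-SDMDS} is false when $\mathcal{D}$ is a dominating set, at least one node (the highest-ID removable node, via \Cref{lemma:ds-addition}) leaves during the round, and the ID-based tiebreak in \textsc{Impedensable-SDMDS} ensures every served node keeps a dominator, so $(t+1).\mathcal{D}$ remains a dominating set. The paper's version is terser---it leaves implicit the points you spell out explicitly (that feasibility, and hence non-addability, persists throughout the round, and that the impedensable node remains impedensable until it actually executes)---but the underlying argument is the same.
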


\begin{proof}
From \Cref{lemma:ds-addition}, {at least one node $M$ (including the maximum ID node in $S^\prime$} from the proof of \Cref{lemma:ds-addition}) would be removed  in round $t$. Furthermore, since $\mathcal{D}$ is a dominating set, \textsc{Addable}($i$) is false at every node $i$. Thus, no node is added to $\mathcal{D}$ in round $t$. Thus, the $|(t+1).\mathcal{D}|\leq |t.\mathcal{D}|-1$. 
    
    For any node $M$ that is removable, \textsc{Impedensable-SDMDS}($i$) is true only if any node $j$ which is (possibly) served by $M$ has other neighbours (of a lower ID) which serve the demands which $M$ is serving to it. This guarantees that $j$ stays dominated and hence $(t+1).\mathcal{D}$ is a dominating set.
\end{proof}

\subsection{Termination of \Cref{algorithm:rules-ds}}\label{subsection:termination}

The following lemma studies the action of \Cref{algorithm:rules-ds} when $\mathcal{D}$ is a minimal dominating set.

\begin{lemma}\label{lemma:ds-minimal}
Let $t.\mathcal{D}$ be the value of $\mathcal{D}$ at the beginning of round $t$. If $\mathcal{D}$ is a minimal dominating set, then $(t+1).\mathcal{D}=t.\mathcal{D}$.

\end{lemma}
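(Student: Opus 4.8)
The plan is to show that once $\mathcal{D}$ is a minimal dominating set, no node is enabled during round $t$; since every move requires an enabled node, it will then follow immediately that $(t+1).\mathcal{D} = t.\mathcal{D}$. Because \Cref{algorithm:rules-ds} has exactly two rules, I would argue separately that neither the \textsc{Addable-SDMDS} guard nor the \textsc{Impedensable-SDMDS} guard holds for any node $i$, and conclude that the round produces no move.

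For the first rule, I would use that a minimal dominating set is in particular a dominating set, so $\mathcal{P}'_{sdmds}(\mathcal{D})$ holds. By the definition of $\mathcal{P}'_{sdmds}$, every node $i$ is either in $\mathcal{D}$ or has each of its demands $d\in D_i$ served by some neighbour $j\in Adj_i$ with $d\in S_j$ and $j[st]=IN$. This is precisely the negation of the existential clause in \textsc{Addable-SDMDS}$(i)$ for nodes with $i[st]=OUT$, while the guard additionally requires $i[st]=OUT$; hence \textsc{Addable-SDMDS}$(i)$ is false for every $i$, and no node enters $\mathcal{D}$.

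For the second rule, I would first establish the key link: \textsc{Removable-SDMDS}$(i)$ implies that $\mathcal{D}\setminus\{i\}$ is still a dominating set. I would verify this by checking domination of every node against $\mathcal{D}\setminus\{i\}$: the first conjunct of \textsc{Removable-SDMDS}$(i)$ guarantees $i$ itself stays dominated (its servers are neighbours, hence distinct from $i$, and so survive the deletion); the second conjunct guarantees that every neighbour $j$ of $i$ whose demand was served by $i$ retains an alternative server $k\neq i$ in $\mathcal{D}$; and all remaining nodes (those in $\mathcal{D}\setminus\{i\}$, and those neither equal nor adjacent to $i$) are unaffected by deleting $i$. Given this implication, minimality of $\mathcal{D}$ — which asserts that $\mathcal{P}'_{sdmds}(\mathcal{D}\setminus\{i\})$ is false for every $i\in\mathcal{D}$ — forces \textsc{Removable-SDMDS}$(i)$ to be false for all $i$. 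Since \textsc{Impedensable-SDMDS}$(i)$ contains \textsc{Removable-SDMDS}$(i)$ as a conjunct, it too is false for every node (the node-ID tie-breaking clause is never reached), so no node leaves $\mathcal{D}$.

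The main obstacle is the middle step: rigorously matching the syntactic guard \textsc{Removable-SDMDS}$(i)$ to the semantic statement ``$\mathcal{D}\setminus\{i\}$ is a dominating set.'' I only need the direction \textsc{Removable-SDMDS}$(i)\Rightarrow\mathcal{P}'_{sdmds}(\mathcal{D}\setminus\{i\})$, which is the convenient one to verify even if the guard happens to be strictly stronger (for instance, it imposes a condition on neighbours $j\in\mathcal{D}$ that are already trivially dominated). Care is needed to enumerate all node categories and to note that the servers of $i$ are necessarily distinct from $i$, so they remain in $\mathcal{D}\setminus\{i\}$. With neither guard ever true, no node is enabled, the round produces no move, and therefore $(t+1).\mathcal{D}=t.\mathcal{D}$.
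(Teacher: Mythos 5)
Your proposal takes essentially the same route as the paper's proof: show that in a state where $\mathcal{D}$ is a minimal dominating set neither guard of \Cref{algorithm:rules-ds} can be true, so no move occurs and $\mathcal{D}$ is unchanged. You actually supply a step the paper leaves implicit, namely the semantic link $\textsc{Removable-SDMDS}(i)\Rightarrow\mathcal{P}_{sdmds}^\prime(\mathcal{D}\setminus\{i\})$, which is exactly what lets the minimality clause of $\mathcal{P}_{sdmds}$ disable the second guard; the paper merely asserts that minimality makes $\textsc{Impedensable-SDMDS}(i)$ false for every $i\in\mathcal{D}$.

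One overreach to fix: minimality asserts $\lnot\mathcal{P}_{sdmds}^\prime(\mathcal{D}\setminus\{i\})$ only for $i\in\mathcal{D}$, so your implication forces $\textsc{Removable-SDMDS}(i)$ to be false only for nodes \emph{in} $\mathcal{D}$, not ``for all $i$'' as you claim. For a node $i\notin\mathcal{D}$ we have $\mathcal{D}\setminus\{i\}=\mathcal{D}$, which \emph{is} a dominating set, so the implication gives no information, and $\textsc{Removable-SDMDS}(i)$ can in fact be true for such a node (its first conjunct holds automatically whenever $\mathcal{D}$ dominates $i$, and nothing about minimality contradicts the second conjunct). Consequently your final inference --- ``$\textsc{Impedensable-SDMDS}$ contains $\textsc{Removable-SDMDS}$ as a conjunct, hence is false for every node'' --- is not justified for nodes outside $\mathcal{D}$. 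The repair is trivial and does not change the structure of the argument: $\textsc{Impedensable-SDMDS}(i)$ also contains the conjunct $i[st]=IN$, which is false precisely for the nodes outside $\mathcal{D}$; together with your argument for nodes inside $\mathcal{D}$, this makes the second guard false everywhere, completing the proof.
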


\begin{proof}
    Since $\mathcal{D}$ is a dominating set, \textsc{Addable}($i$) is false for every node in $V(G)$, i.e., the first action is disabled for every node in $V(G)$.
    Since $\mathcal{D}$ is minimal, \textsc{Impedensable-SDMDS}($i$) is false for every node $i$ in $\mathcal{D}$. Hence, the second action is disabled at every node $i$ in $\mathcal{D}$. 
    Thus, $\mathcal{D}$ remains unchanged.
\end{proof}

\subsection{Eventual Lattice-Linearity of \Cref{algorithm:rules-ds}}\label{subsection:ds-eventual}

\Cref{lemma:ds-addition} showed that \Cref{algorithm:rules-ds}.2 is lattice-linear. 
In this subsection, we make additional observations about \Cref{algorithm:rules-ds} to generalize the notion of lattice-linearity to eventually lattice-linear algorithms. 
We have the following observations.
\begin{enumerate}
    \item From \Cref{lemma:d-not-ds}, starting from any state, \Cref{algorithm:rules-ds} will reach a feasible state even if a node reads old information about the neighbours. This is due to the fact that \Cref{algorithm:rules-ds}.1 only adds nodes to $\mathcal{D}$. 
    If incorrect information about the state of neighbours causes $i$ not to be added to $\mathcal{D}$, this will be corrected when $i$ executes again and obtains recent information about neighbours. If incorrect information causes $i$ to be added to $\mathcal{D}$ unnecessarily, it does not affect this claim. 
    \item From \Cref{lemma:ds-addition}, if we start $G$ in a feasible state where no node has incorrect information about the neighbours in the initial state then \Cref{algorithm:rules-ds}.2 reaches a minimal dominating set. Note that this claim remains valid even if the nodes execute actions of \Cref{algorithm:rules-ds}.2 with old information about the neighbours as long as the initial information they use is correct. 
    \item We observe that \Cref{algorithm:rules-ds}.1 and \Cref{algorithm:rules-ds}.2 have very limited interference with each other, and so an arbitrary graph $G$ will reach an optimal state even if nodes are using old information. 
\end{enumerate}

From the above observations, if we allow the nodes to read old values, then the nodes can violate the feasibility of $G$ finitely many times and so $G$ will eventually reach a feasible state and stay there forever. 
We introduce the class of eventually lattice-linear algorithms (ELLA). \Cref{algorithm:rules-ds} is an ELLA.

\begin{definition}\label{definition:ella} \textbf{Eventually Lattice-Linear Algorithms (ELLA).}
    An algorithm $A$ is ELLA for a problem $P$, represented by a predicate $\mathcal{P}$, if its rules can be split into \arya{two sets of rules} $F_1$ and $F_2$ and there exists a subset $S_f$ of the state space $S$, such that 
    
    \begin{enumerate}[label=(\alph*)]
        \item \arya{Any computation of $A$ (from its permitted initial states)
        eventually reaches a state
        where $S_f$ is stable in $A$,
        i.e., $S_f$ is true and remains true subsequently. }
        \item Rules in $F_1$ are disabled in a state in $S_f$.
        \item \arya{$F_2$ is a lattice-linear algorithm, i.e., it follows \Cref{definition:ll-algos}, given that the system initializes in a state in $S_f$.}
    \end{enumerate}
\end{definition}

\arya{
\begin{definition}\label{definition:ellssa} \textbf{Eventually Lattice-Linear Self-Stabilizing (ELLSS) Algorithms.}
    Continuing from \Cref{definition:ella}, $A$ is an ELLSS algorithm iff $F_1$ takes the system to a state in $S_f$ from an arbitrary state, and $F_2$ is capable of taking the system from any state in $S_f$ to an optimal state.
\end{definition}

\noindent\textbf{\textit{Remark}}: The algorithms that we study in this paper are ELLSS algorithms, i.e., they follow \Cref{definition:ellssa}. Notice that \Cref{algorithm:rules-ds} is an ELLSS algorithm.
}

In \Cref{algorithm:rules-ds}, $F_1$ corresponds to \Cref{algorithm:rules-ds}.1 and $F_2$ corresponds to \Cref{algorithm:rules-ds}.2.
This algorithm satisfies the properties of \Cref{definition:ellssa}. 

\begin{example}\label{example:4-nodes}
    We illustrate the eventual lattice-linear structure of \Cref{algorithm:rules-ds} where we consider the special case where all nodes have the same single service and demand. Effectively, it becomes a case of minimal dominating set.

    In \Cref{figure:half-lattices-from-ds-example}, we consider an example of graph $G_4$ containing four nodes connected in such a way that they form two disjoint edges, i.e., $V(G_4)=\{v_1,v_2,v_3,v_4\}$ and $E(G_4)=\{\{v_1,v_2\},\{v_3,v_4\}\}$. 
    
    We write a state of this graph as $(v_1[st], v_2[st], v_3[st], v_4[st])$.
    As shown in this figure, of the 16 states in the state space, 9 are part of 4 disjoint lattices. These are feasible states, i.e., states where nodes with $st$ equals $IN$ form a (possibly non-minimal) dominating set. And, the remaining 7 are not part of any lattice. These are infeasible states, i.e., states where nodes with $st$ equals $IN$ do not form a dominating set. The states not taking part in any lattice structure (the infeasible states) are not shown in \Cref{figure:half-lattices-from-ds-example}.

    \arya{In a non-feasible state, some node will be addable. The instruction executed by addable nodes is not lattice-linear: an addable node moves in the dominating set unconditionally. After this, when no node is addable, then the global state $s$ becomes feasible state, i.e., $s$ manifests a valid dominating set. In $s$, however, some nodes may be removable. Only the removable nodes can be \imped. The instruction executed by an \imped node is lattice-linear.
    
    E.g., notice in \Cref{figure:half-lattices-from-ds-example} (a), assuming that the initial state is $(IN$, $IN$, $IN$, $IN)$, that $v_2$ and $v_4$ are \imped. Since they execute asynchronously, a lattice is induced among all possible global states that $G_4$ transitions through. If only $v_2$ (respectively, $v_4$) executes, the global state we obtain is $(IN$, $OUT$, $IN$, $IN)$ (respectively, $(IN$, $IN$, $IN$, $OUT)$). Since eventually both the nodes change their local states, we obtain the global state $(IN$, $OUT$, $IN$, $OUT)$.
    }
    \qed
\end{example}

\begin{figure}[ht]
    \centering
    \subfigure[]{
        \begin{tikzpicture}[scale=.7,every node/.style={scale=.7}]
            \node at (0,0) (a1) {(IN,OUT,IN,OUT)};
            \node at (-1.5,-1) (a2) {(IN,OUT,IN,IN)};
            \node at (1.5,-1) (a3) {(IN,IN,IN,OUT)};
            \node at (0,-2) (a4) {(IN,IN,IN,IN)};
            \draw (a1) -- (a2);
            \draw (a1) -- (a3);
            \draw (a2) -- (a4);
            \draw (a3) -- (a4);
        \end{tikzpicture}
    }\quad\quad 
    \subfigure[]{
        \begin{tikzpicture}[scale=.7,every node/.style={scale=.7}]
            \node at (0,0) (a1) {(OUT,IN,OUT,IN)};
            \node at (0,-1) (a2) {//only 1 state};
        \end{tikzpicture}
    }\\
    \subfigure[]{
        \begin{tikzpicture}[scale=.7,every node/.style={scale=.7}]
            \node at (0,0) (a1) {(OUT,IN,IN,OUT)};
            \node at (0,-1) (a2) {(OUT,IN,IN,IN)};
            \draw (a1) -- (a2);
        \end{tikzpicture}
    }\quad\quad 
    \subfigure[]{
        \begin{tikzpicture}[scale=.7,every node/.style={scale=.7}]
            \node at (0,0) (a1) {(IN,OUT,OUT,IN)};
            \node at (0,-1) (a2) {(IN,IN,OUT,IN)};
            \draw (a1) -- (a2);
        \end{tikzpicture}
    }
    \caption{Example lattice induced by \Cref{algorithm:rules-ds}.1 in $G_4$ ($G_4$ is described in \Cref{example:4-nodes}).}
    \label{figure:half-lattices-from-ds-example}
\end{figure}
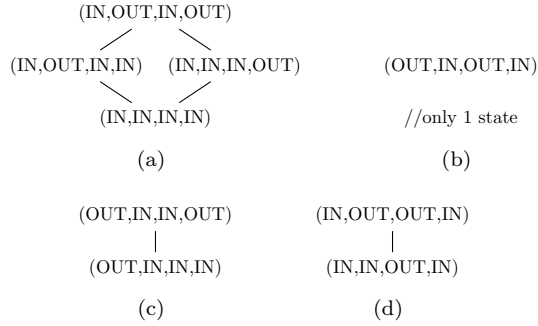

\subsection{Analysis  of \Cref{algorithm:rules-ds}: Time and Space complexity}\label{subsection:time-space-complexity-analysis}

\begin{theorem}\label{theorem:ds-convergence-time}
Starting from an arbitrary state, \Cref{algorithm:rules-ds} reaches an optimal state within $2n$ moves (or more precisely 1 round plus $n$ moves).
\end{theorem}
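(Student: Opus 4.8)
The plan is to bound the two phases of the algorithm separately, matching the decomposition into \Cref{algorithm:rules-ds}.1 (additions) and \Cref{algorithm:rules-ds}.2 (removals), and then sum the two bounds. First I would invoke \Cref{lemma:d-not-ds}: from an arbitrary initial state, if $\mathcal{D}$ is not yet a dominating set, then after one complete round $\mathcal{D}$ is a dominating set, i.e., $G$ reaches a feasible state (equivalently, $\textsc{Rank}$ is driven to $0$). This accounts for the ``$1$ round'' term, which is the cost of the non-lattice-linear addition rule.

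Next I would argue that once a feasible state is reached, no further additions ever occur, so all remaining activity is confined to \Cref{algorithm:rules-ds}.2. The key observation is that \textsc{Addable-SDMDS}$(i)$ requires $\mathcal{D}$ to fail to serve some demand of $i$; but by \Cref{lemma:ds-removal} (together with the design of the \textsc{Impedensable-SDMDS} guard, which lets a node leave only when every node it serves stays dominated) the set $\mathcal{D}$ remains a dominating set after every removal. Hence no node ever becomes addable again and \Cref{algorithm:rules-ds}.1 stays permanently disabled. In this lattice-linear regime a node can only move from $IN$ to $OUT$, and once it is $OUT$ it is disabled for both rules (it is not addable since $\mathcal{D}$ still dominates, and \textsc{Impedensable-SDMDS} requires $i[st]=IN$). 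Therefore each node makes at most one move in this phase, giving at most $n$ removal moves; combined with \Cref{lemma:ds-minimal}, which shows a minimal dominating set is a fixpoint, the lattice-linear phase terminates in an optimal state. This establishes the sharper ``$1$ round plus $n$ moves'' bound.

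To obtain the ``within $2n$ moves'' statement I would bound the whole computation move-by-move by showing each node changes state at most twice. A node whose initial state is $IN$ can only be removed ($IN\to OUT$), after which its demands are served by others and it is never re-added, so it moves at most once. A node whose initial state is $OUT$ can be added once ($OUT\to IN$) and subsequently removed once ($IN\to OUT$); it cannot be added a second time because, as argued above, removals never un-dominate any node, so a node that has left $\mathcal{D}$ never regains an unserved demand. Thus every node performs at most two moves, and the total is at most $2n$; equivalently, the at most $n$ addition moves of the first round plus the at most $n$ removal moves sum to $2n$.

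The main obstacle I anticipate is ruling out oscillation rigorously, i.e., proving that a node, once removed, can never become addable again, and doing so under the \emph{asynchrony with monotonous read} model where nodes may act on stale reads. The clean case relies on the structural property that the \textsc{Impedensable-SDMDS} guard forbids a node from leaving unless each node it serves has an alternative in-set server (so domination is preserved, per \Cref{lemma:ds-addition} and \Cref{lemma:ds-removal}). Under old reads I would lean on the lattice-linearity established in \Cref{lemma:ds-addition}: within the feasible subset the induced $\prec$-lattice orders the global states monotonically toward its supremum, so a stale read can neither reverse a removal nor justify a spurious addition, which is precisely what keeps the per-node move count bounded by two.
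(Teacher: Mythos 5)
Your proposal is correct and follows essentially the same route as the paper: phase one bounded by \Cref{lemma:d-not-ds} (one round, at most $n$ moves, to reach feasibility), phase two bounded by \Cref{lemma:ds-removal} (only removals can occur once feasible, each preserving the dominating set, so at most $n$ further moves until the fixpoint of \Cref{lemma:ds-minimal}). Your additional per-node counting argument (each node moves at most twice) and the explicit justification that a removed node never becomes addable again are elaborations of steps the paper leaves implicit, not a different proof strategy.
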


\begin{proof}
    From \Cref{lemma:d-not-ds}, we have that starting from an arbitrary state, \Cref{algorithm:rules-ds} will reach a feasible state within one round (or within $n$ moves).
    
    After that, if the input graph $G$ is not in an optimal state, then at least one node moves out such that $G$ stays in a feasible state (\Cref{lemma:ds-removal}). Thus, $G$ manifests an optimal state within $n$ additional moves.
\end{proof}

\begin{corollary}\label{corollary:algo-stabilizing-silent}
    \Cref{algorithm:rules-ds} is self-stabilizing and silent.
\end{corollary}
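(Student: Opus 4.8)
The plan is to verify the three ingredients that together give the claim: convergence and closure (which jointly constitute self-stabilization, per the definition in \Cref{section:preliminaries}) and silence. Each of these maps directly onto a result already in hand, so this corollary is essentially an assembly of the preceding lemmas and \Cref{theorem:ds-convergence-time} rather than a fresh argument. Throughout, $S_o$ is the set of optimal states, i.e., those in which $\mathcal{D}$ is a minimal dominating set ($\mathcal{P}_{sdmds}(\mathcal{D})$ holds).

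For convergence, I would simply invoke \Cref{theorem:ds-convergence-time}: starting from an arbitrary state, \Cref{algorithm:rules-ds} reaches an optimal state (within $2n$ moves). Since the underlying lemmas (\Cref{lemma:d-not-ds} for reaching a feasible state and \Cref{lemma:ds-removal} together with the lattice-linearity of \Cref{lemma:ds-addition} for reaching minimality) hold regardless of the scheduler and tolerate nodes reading old values, this establishes the convergence requirement for any computation.

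For closure and silence, I would reason directly about the two guards in an optimal state, reusing the argument behind \Cref{lemma:ds-minimal}. In an optimal state $\mathcal{D}$ is a dominating set, so \textsc{Addable-SDMDS}($i$) is false at every node $i$ and \Cref{algorithm:rules-ds}.1 is disabled; and $\mathcal{D}$ is minimal, so \textsc{Impedensable-SDMDS}($i$) is false at every node $i$ and \Cref{algorithm:rules-ds}.2 is disabled. Hence no node is enabled, which is exactly the silence condition. Because no node is enabled, no move occurs and the state is a fixed point, so any computation beginning in $S_o$ stays in $S_o$; this is closure.

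The only point needing slight care is that the \emph{statement} of \Cref{lemma:ds-minimal} asserts only the fixed-point property $(t+1).\mathcal{D}=t.\mathcal{D}$, whereas silence requires the stronger fact that \emph{every} guard evaluates to false. So instead of quoting the lemma's conclusion I would reproduce its short guard-by-guard check to obtain the no-enabled-node property, from which both silence and closure drop out simultaneously. There is no real obstacle here: the substantive work was already done in \Cref{lemma:d-not-ds,lemma:ds-removal,lemma:ds-minimal} and \Cref{theorem:ds-convergence-time}, and this corollary only records their consequences.
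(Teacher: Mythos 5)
Your proposal is correct and matches the paper's (implicit) proof exactly: the corollary is stated without a separate argument precisely because convergence is \Cref{theorem:ds-convergence-time} and closure/silence is the guard-by-guard check inside the proof of \Cref{lemma:ds-minimal}, which shows both actions are disabled in any optimal state. Your observation that the statement of \Cref{lemma:ds-minimal} gives only the fixed-point property while its proof actually yields the stronger no-enabled-node fact is a fair and accurate refinement, but it does not change the route.
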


\begin{observation}
    At any time-step, a node will take $O((\Delta)^4\times (max_d)^2)$ time, where
    (1) $\Delta$ is the maximum degree of any node in $V(G)$, and (2) $max_d$ is the total number of distinct demands made by all the nodes in $V(G)$.
\end{observation}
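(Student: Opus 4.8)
The plan is to observe that the per-step cost of a node $i$ is simply the cost of evaluating the two guards of \Cref{algorithm:rules-ds} at $i$, namely \textsc{Addable-SDMDS}($i$) and \textsc{Impedensable-SDMDS}($i$); everything reduces to bounding how much work each nested quantifier forces. I would first record the sizes of the sets that get iterated over: $|Adj_i|\leq \Delta$ for every node, $|D_i|\leq max_d$ (and likewise $|D_j|\leq max_d$ for any neighbour $j$, since every $D_j$ is a subset of the pool of all distinct demands), and $|\textsc{Dominators-Of}(j)|\leq |Adj_j|+1=O(\Delta)$ because \textsc{Dominators-Of}($j$) is contained in $Adj_j\cup\{j\}$. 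Throughout I would assume each atomic set-membership test such as $d\in S_k$ or $d\in S_i$ costs $O(1)$ (e.g., by indexing each $S_k$ with a boolean array over the demand universe), folding any logarithmic lookup overhead into the stated factors.

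Next I would bound the constituent macros from the bottom up. \textsc{Addable-SDMDS}($i$) scans every $d\in D_i$ against every $j\in Adj_i$, costing $O(\Delta\cdot max_d)$. The macro \textsc{Removable-SDMDS}($i$) is dominated by its second conjunct, which ranges over $j\in Adj_i$, then over $d\in D_j$, and for each such pair searches for a witness $k\in Adj_j$; this yields $O(\Delta\cdot max_d\cdot \Delta)=O(\Delta^2\cdot max_d)$, which also subsumes the cost of its first conjunct. These two quantities are both strictly below the target bound, so they will only enter as lower-order terms.

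Finally I would analyze the bottleneck, \textsc{Impedensable-SDMDS}($i$). After the $O(\Delta^2\cdot max_d)$ evaluation of \textsc{Removable-SDMDS}($i$), the predicate ranges over $j\in Adj_i$ ($\Delta$ choices) and over $d\in D_j$ ($max_d$ choices); for each such pair it iterates over $k\in\textsc{Dominators-Of}(j)$ ($O(\Delta)$ choices), and for each such $k$ it re-evaluates \textsc{Removable-SDMDS}($k$) at cost $O(\Delta^2\cdot max_d)$. Multiplying these factors gives inner work of order $\Delta\cdot max_d\cdot \Delta\cdot(\Delta^2\cdot max_d)=O(\Delta^4\cdot (max_d)^2)$, which dominates the costs of both \textsc{Addable-SDMDS}($i$) and the standalone \textsc{Removable-SDMDS}($i$) evaluation. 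Hence the per-step cost of a node is $O((\Delta)^4\times (max_d)^2)$, as claimed.

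The main obstacle is not any single calculation but correctly attributing the multiplicative factors: the key insight is that the recursive call to \textsc{Removable-SDMDS}($k$) sitting inside the triply-nested quantifier of \textsc{Impedensable-SDMDS} is the true dominant cost. One must see that the two factors of $max_d$ arise from genuinely distinct loops — the outer $\forall\,d\in D_j$ and the demand loop hidden inside \textsc{Removable-SDMDS}($k$) — and that the four factors of $\Delta$ come from $Adj_i$, from \textsc{Dominators-Of}($j$), and from the two neighbour scans inside the recursive \textsc{Removable} evaluation, rather than from any single over-counted source. The only modelling decision that needs pinning down is the $O(1)$ treatment of membership tests; without it the bound would acquire an extra $max_d$ (or logarithmic) factor, so that assumption should be stated explicitly for the bound to hold exactly as written.
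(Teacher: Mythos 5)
Your analysis is correct: the paper states this observation without any proof, and your accounting supplies precisely the justification implicit in the stated bound --- the dominant cost is the triply-nested quantification in \textsc{Impedensable-SDMDS} ($j\in Adj_i$, $d\in D_j$, $k\in\textsc{Dominators-Of}(j)$) with a re-evaluation of \textsc{Removable-SDMDS}($k$) inside, giving $\Delta\cdot max_d\cdot\Delta\cdot\Delta^2\cdot max_d = O(\Delta^4 (max_d)^2)$. Your explicit flagging of the $O(1)$ set-membership assumption is a point the paper leaves unstated but which is indeed needed for the bound to hold exactly as written.
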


\section{Applying ELLSS in Minimal Vertex Cover}\label{section:mvc}

The execution of \Cref{algorithm:rules-ds} was divided in two phases, (1) where the system reaches a feasible state (reduction of \textsc{Rank} to $0$), and (2) where the system reaches an optimal state (reduction of \textsc{Badness} to $0$).

Such design defines the concept of ELLSS algorithms. This design can be extended to numerous other problems 
where the optimal global state can be defined in terms of a minimal (or maximal) set $\mathcal{S}$ of nodes.
This includes the minimal vertex cover (\mvc) problem, maximal independent set problem and their variants. In this section, we discuss the extension to \mvc. 

In the \textit{minimal vertex cover} problem, the input is an arbitrary graph $G$, and the task is to compute a minimal set $\mathcal{V}$ such that for any edge $\{i,j\}\in E(G)$, $(i\in \mathcal{V})$ or $(j\in \mathcal{V})$. If a node $i$ is in $\mathcal{V}$, then $i[st]=IN$, otherwise $i[st]=OUT$.

The proposition $\mathcal{P}^\prime_v$ defining a feasible state and the proposition $\mathcal{P}_v$ defining the optimal state can be defined as follows.
\begin{center}
    $\mathcal{P}_v^\prime(\mathcal{V})\equiv \forall i\in V(G):((i\in \mathcal{V})\lor (\forall j\in Adj_i, j\in \mathcal{V}))$.\\
    $\mathcal{P}_v(\mathcal{V})\equiv \mathcal{P}_v^\prime(\mathcal{V}) \land (\forall i\in \mathcal{V}, \lnot\mathcal{P}_v^\prime(\mathcal{V}\setminus \{i\})).$
\end{center}

To develop an algorithm for \mvc, we utilize the macros in the following table. A node $i$ is \textit{removable} if all the nodes in its neighbourhood are in the vertex cover (\vc). $i$ is \textit{addable} if $i$ is not in the \vc and there is some node adjacent to it that is not in the \vc. $i$ is \textit{\imped} if  $i$ is in the \vc, and $i$ is the highest ID node that is removable in its distance-1 neighbourhood.

\begin{center}
    \begin{tabular}{|l|}
        \hline
        \textsc{Removable-\mvc}$(i)\equiv \forall j \in Adj_i, j[st]=IN$.\\
        \textsc{Addable-\mvc}$(i)\equiv i[st]=OUT\land(\exists j\in Adj_i:j[st]=OUT)$.\\
        \textsc{Impedensable-\mvc}$(i)\equiv i[st]=IN$ $\land$ \textsc{Removable-\mvc}$(i)\land$\\
        \quad\quad\quad\quad $(\forall j\in Adj_i: j[id]<i[id]\lor \lnot$\textsc{Removable-\mvc}$(j))$. \\
        \hline
    \end{tabular}
\end{center}

Based on the definitions above, the algorithm for \mvc is described as follows. If a node is addable, then it moves into the VC. If a node is \imped, then it moves out of the VC.
\begin{algorithm}\label{algorithm:rules-mvc}Rules for node $i$.
    \begin{center}
        \begin{tabular}{|l|}
            \hline
            \textsc{Addable-\mvc}$(i)\longrightarrow i[st]=IN$.\\
            \textsc{Impedensable-\mvc}$(i)\longrightarrow i[st]=OUT$.\\
            \hline
        \end{tabular}
    \end{center}
\end{algorithm}

\Cref{algorithm:rules-mvc} is an ELLSS algorithm in that it satisfies the conditions in \Cref{definition:ellssa}, where $F_1$ corresponds to the first action of \Cref{algorithm:rules-mvc}, $F_2$ corresponds to its second action, and $S_f$ is the set of the states for which $\mathcal{P}_v^\prime$ holds true.
Thus, starting from any arbitrary state, the algorithm eventually reaches a state where $\mathcal{V}$ is a minimal vertex cover.

\arya{
\begin{lemma}
    \Cref{algorithm:rules-mvc} is a silent eventually lattice-linear self-stabilizing algorithm for minimal vertex cover.
\end{lemma}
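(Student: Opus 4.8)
The plan is to verify that Algorithm~\ref{algorithm:rules-mvc} satisfies each of the three clauses of Definition~\ref{definition:ellssa} (equivalently, Definition~\ref{definition:ella} plus the self-stabilization condition), and separately that it is silent. The decomposition is already suggested in the text: let $F_1$ be the first rule (\textsc{Addable-\mvc}$(i)\longrightarrow i[st]=IN$), let $F_2$ be the second rule (\textsc{Impedensable-\mvc}$(i)\longrightarrow i[st]=OUT$), and let $S_f=\{s:\mathcal{P}_v^\prime(\mathcal{V})\}$ be the set of feasible states (those in which $\mathcal{V}$ is a, possibly non-minimal, vertex cover). The bulk of the argument is to re-run the four structural lemmas proved for SDMDS (Lemmas~\ref{lemma:d-not-ds}, \ref{lemma:ds-addition}, \ref{lemma:ds-removal}, \ref{lemma:ds-minimal}) in the simpler \mvc setting, which is exactly the ``generic design'' the section claims to reuse.

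First I would establish clause (a), that every computation reaches and stays in $S_f$. The key point is that $F_1$ only adds nodes to $\mathcal{V}$: if some edge $\{i,j\}$ is uncovered, then both $i$ and $j$ are $OUT$, so \textsc{Addable-\mvc} is true at $i$ (and at $j$), and the only action available to such a node sets its state to $IN$. Thus within one round any uncovered edge gets covered, so $\mathcal{V}$ becomes a vertex cover within one round (the analogue of Lemma~\ref{lemma:d-not-ds}). For stability of $S_f$, I would note that once $\mathcal{P}_v^\prime$ holds, \textsc{Addable-\mvc} is false everywhere (no $OUT$ node has an $OUT$ neighbour, else the edge between them would be uncovered), so $F_1$ is permanently disabled; and $F_2$ only removes a node $i$ when \textsc{Removable-\mvc}$(i)$ holds, i.e.\ all of $i$'s neighbours are $IN$, which guarantees every edge incident to $i$ remains covered after $i$ leaves. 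Hence feasibility is preserved by $F_2$, so $S_f$ is closed. This simultaneously gives clause (b), since $F_1$ being disabled in every state of $S_f$ is precisely what was just shown.

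For clause (c), that $F_2$ is a lattice-linear algorithm in the sense of Definition~\ref{definition:ll-algos} once the system is in $S_f$, I would mirror Lemma~\ref{lemma:ds-addition}: fixing the total order $\prec_l$ on each node's local states as $OUT\prec_l IN$ (so that removals move the global state up the induced $\prec$-lattice), I claim that from any non-minimal feasible state at least one node is \textsc{Impedensable-\mvc}, namely the highest-ID removable node $M$ (removable nodes exist because $\mathcal{V}$ is non-minimal). That $M$ is \imped follows because every neighbour $j$ of $M$ with $j[id]>M[id]$ fails the tie-break clause of \textsc{Impedensable-\mvc}, so $M$'s own guard reduces to ``$M$ is the local-maximum removable ID,'' which holds by choice of $M$; and the tie-break guarantees no two adjacent removable nodes leave together, so feasibility is preserved (the analogue of Lemma~\ref{lemma:ds-removal}). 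The \imped-ness of $M$ shows $\mathcal{P}_v$ is lattice-linear with respect to the induced order, giving reachability of a minimal vertex cover by the lattice-linearity theory of Section~\ref{section:preliminaries}. Self-stabilization then follows from Definition~\ref{definition:ellssa}: $F_1$ drives any arbitrary state into $S_f$, and $F_2$ drives any feasible state to an optimal one.

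Finally, silence follows from the \mvc analogue of Lemma~\ref{lemma:ds-minimal}: in a minimal vertex cover, \textsc{Addable-\mvc} is false everywhere (it is a vertex cover) and \textsc{Impedensable-\mvc} is false everywhere, since no node is removable in a \emph{minimal} cover (removing any node would leave some incident edge uncovered), so both guards are disabled and no node is enabled. The main obstacle I anticipate is clause (c): I must argue carefully that the tie-break on IDs inside \textsc{Impedensable-\mvc} simultaneously (i) guarantees at least one \imped node exists in every suboptimal feasible state and (ii) prevents two mutually-adjacent removable nodes from both being \imped, so that concurrent/stale-read removals cannot break feasibility. Establishing that this ID-ordering really induces a genuine $\prec$-lattice (rather than just an arbitrary partial order with no convergence guarantee) is the delicate step, and I would handle it exactly as in the proof of Lemma~\ref{lemma:ds-addition}, since the \mvc guards are a direct specialization of the SDMDS guards to the single-service/single-demand, ``dominate via edges'' setting.
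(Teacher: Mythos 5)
Your proposal is correct and follows essentially the same route as the paper's own proof: the same $F_1$/$F_2$/$S_f$ decomposition, feasibility reached within one round via the add rule (with $S_f$ closed because a node leaves only when all its neighbours are $IN$), lattice-linearity of the remove rule witnessed by the highest-ID removable node together with the observation that two adjacent removable nodes can never both be \imped, and silence because a minimal cover admits no addable or removable node. The only flaw is cosmetic: for removals to move \emph{up} the induced lattice the local order must be $IN \prec_l OUT$ (local states ordered as visited), not $OUT \prec_l IN$ as you wrote; nothing in your argument depends on this slip.
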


\begin{proof}
    In an arbitrary non-feasible state (where the input graph $G$ does not manifest a valid \vc), there is at least one node that is addable. An addable node immediately executes the first instruction of \Cref{algorithm:rules-mvc} and moves in the \vc. This implies that by the end of the first round, we obtain a valid (possibly non-minimal) \vc.

    If the input graph $G$ is in a feasible, but not optimal, state (where $G$ manifests a non-minimal \vc), then there is at least one removable node. This implies that there is at least one \imped node $i$ in that state (e.g., the removable node with the highest ID).
    Under \Cref{algorithm:rules-mvc}, any node in $Adj_i$ will not execute until $i$ changes its state. $i$ is removable because all nodes in $Adj_i$, along with $i$, are in the vertex cover. Thus $i$ must execute so that it becomes non-removable. This shows that  the second rule in \Cref{algorithm:rules-mvc} is lattice-linear.

    In a non-minimal, but valid, \vc, there is at least one node that is \imped, thus, with every move, the size of the vertex cover, manifested by $G$, reduces by 1. Also, notice that when an \imped node $i$ changes its state, no node in $Adj_i$ changes its state simultaneously. Thus, the validity of the vertex cover is not impacted when $i$ moves. Therefore, \Cref{algorithm:rules-mvc} is self-stabilizing.

    When $G$ manifests a minimal vertex cover, no node is addable or removable. This shows that \Cref{algorithm:rules-mvc} is silent.
\end{proof}
}

Observe that in \Cref{algorithm:rules-mvc}, the definition of \textsc{Impedensable} relies only on the information about distance-2 neighbours. Hence, the evaluation of guards take $O(\Delta^2)$ time. In contrast, (the standard) minimal dominating set problem would require the information of distance-4 neighbours to evaluate \textsc{Impedensable}. Hence, the evaluation of guards in that would take $O(\Delta^4)$ time. This algorithm converges in $2n$ moves (or more precisely 1 round plus $n$ moves).

\section{Applying ELLSS in Maximal Independent set}\label{section:mis}

In this section, we consider the application of ELLSS in the problem of maximal independent set (\mis). Unlike \mvc and SDMDS problems where we tried to reach a minimal set, here,  we have to obtain a maximal set. In the \textit{maximal independent set} problem, the input is an arbitrary graph $G$, and the task is to compute a maximal set $\mathcal{I}$ such that for any two nodes $i\in\mathcal{I}$ and $j\in\mathcal{I}$, if $i\neq j$, then $i$ and $j$ are not adjacent. If a node $i$ is in $\mathcal{I}$, then $i[st]=IN$, otherwise $i[st]=OUT$.

The proposition $\mathcal{P}^\prime_i$ defining a feasible state and the proposition $\mathcal{P}_i$ defining the optimal state can be defined as follows.
\begin{center}
    $\mathcal{P}_i^\prime(\mathcal{I})\equiv \forall i\in V(G):((i\not\in\mathcal{I})\lor (\forall j\in Adj_i: j\not\in \mathcal{I}))$.\\
    $\mathcal{P}_i(\mathcal{I})\equiv \mathcal{P}_i^\prime(\mathcal{I})\land(\forall i\in V(G): \lnot\mathcal{P}_i^\prime(\mathcal{I}\cup\{i\}))$.
\end{center}

To develop the algorithm for \mis, we define the macros in the following table. A node $i$ is \textit{addable} if all the neighbours of $i$ are out of the independent set(\is). A node is \textit{removable} if $i$ is in the \is and there is some neighbour of $i$ that is also in \is. $i$ is \textit{\imped} if $i$ is out of the IS, and $i$ is the highest ID node in its distance-1 neighbourhood that is addable.

\begin{center}
    \begin{tabular}{|l|}
        \hline
        \textsc{Addable}($i)\equiv \forall j \in Adj_i, j[st]=OUT$.\\
        \textsc{Removable-\mis}$(i)\equiv i[st]=IN\land(\exists j\in Adj_i:j[st]=IN$).\\
        \textsc{Impedensable-\mis}$(i)\equiv i[st]=OUT\land$ \textsc{Addable}$(i)\land$\\
        \quad\quad\quad\quad $(\forall j\in Adj_i:j[id]<i[id]\lor\lnot$\textsc{Addable}$(j))$.\\
        \hline
    \end{tabular}
\end{center}

Based on the definitions above, the algorithm for \mis is described as follows. If a node $i$ is \imped, then it moves into the \is. If $i$ is removable, then it moves out of the \is.

\begin{algorithm}\label{algorithm:rules-mis}Rules for node $i$.
    \begin{center}
        \begin{tabular}{|l|}
            \hline
            \textsc{Removable-\mis}$(i)\longrightarrow i[st]=OUT$.\\
            \textsc{Impedensable-\mis}$(i)\longrightarrow i[st]=IN$.\\
            \hline
        \end{tabular}
    \end{center}
\end{algorithm}

This algorithm is an ELLSS algorithm as well: as per \Cref{definition:ellssa}, $F_1$ corresponds to the first action of \Cref{algorithm:rules-mvc}, $F_2$ corresponds to its second action, and and $S_f$ is the set of the states for which $\mathcal{P}_i^\prime$ holds true.
Thus, starting from any arbitrary state, the algorithm eventually reaches a state where $\mathcal{I}$ is a maximal independent set.

\arya{
\begin{lemma}
    \Cref{algorithm:rules-mis} is a silent eventually lattice-linear self-stabilizing algorithm for maximal independent set.
\end{lemma}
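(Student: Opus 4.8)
The plan is to mirror the structure of the \mvc proof, adapting it to the ``maximal set'' setting of \mis. The claim bundles four properties---silence, eventual lattice-linearity, self-stabilization, and correctness (that the fixed point is a maximal independent set)---so I would prove each in turn, following the schema of \Cref{definition:ellssa} with $F_1$ being the \textsc{Removable-\mis} rule, $F_2$ being the \textsc{Impedensable-\mis} rule, and $S_f = \{s : \mathcal{P}_i^\prime(\mathcal{I})\}$ the set of feasible states where $\mathcal{I}$ is a (possibly non-maximal) independent set.

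First I would show convergence to $S_f$ via $F_1$. In an arbitrary state that is \emph{not} feasible, $\mathcal{P}_i^\prime$ is violated, so there exist two adjacent nodes both with $st = IN$; each such node is \textsc{Removable-\mis}. The key observation is that $F_1$ only moves nodes \emph{out} of $\mathcal{I}$, so it can only shrink $\mathcal{I}$; moreover, once a node leaves it cannot be forced back by $F_1$. Hence the number of feasibility violations is monotonically non-increasing and the system reaches a state in $S_f$ in finitely many moves (within one round), even if nodes read old values---reading a stale $IN$ neighbour can only cause a spurious removal, which does not hurt feasibility, and a missed removal is corrected on the next evaluation. This establishes condition (a) of \Cref{definition:ellssa}, and since $F_1$ (\textsc{Removable}) is disabled once $\mathcal{P}_i^\prime$ holds, condition (b) follows.

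Next I would establish that $F_2$ is lattice-linear on $S_f$, giving condition (c) and self-stabilization. Starting from a feasible but non-maximal state, some node is \textsc{Addable} (all its neighbours are $OUT$), and among such nodes the highest-ID one is \textsc{Impedensable-\mis}. The argument parallels \Cref{lemma:ds-addition}: an \imped node $i$ cannot be left unexecuted if a maximal set is to be reached, because its addability persists until either $i$ or a higher-ID addable neighbour enters; and by the ID tie-break, no two adjacent addable nodes are simultaneously \imped, so when $i$ enters $\mathcal{I}$ no neighbour enters at the same step, preserving independence. Thus each move of $F_2$ strictly enlarges $\mathcal{I}$ while keeping it independent, the size is bounded by $n$, and the supremum of the induced lattice is a maximal independent set---giving self-stabilization.

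For silence I would argue that in a maximal independent set no rule is enabled: maximality means no node is \textsc{Addable}, so no node is \textsc{Impedensable-\mis}, and independence means no node is \textsc{Removable-\mis}; hence both rules are disabled and the state is a fixed point. The main obstacle, as in the dominating-set analysis, is the \emph{interference} between $F_1$ and $F_2$ under asynchrony: I must confirm that a node reading old values cannot oscillate between the two rules indefinitely. Here the tie-breaking on IDs in \textsc{Impedensable-\mis} is doing the real work---it is what guarantees that additions under stale reads never create an adjacent $IN$--$IN$ pair that would re-enable $F_1$, so feasibility is violated only finitely often and the system settles. Verifying this limited-interference property carefully, rather than the routine monotonicity counts, is the crux of the argument.
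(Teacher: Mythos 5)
Your first three paragraphs track the paper's own proof almost exactly: removals restore feasibility within a round, the highest-ID addable node is \imped so the second rule is lattice-linear on $S_f$, each \imped move preserves independence because no neighbour of the executing node moves simultaneously, and at a maximal independent set both guards are disabled. That part is correct and is essentially all that the paper's proof itself establishes.

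The gap is in your last paragraph. The claim that the ID tie-break ``guarantees that additions under stale reads never create an adjacent $IN$--$IN$ pair'' is false. Consider the path $l - k - i - j$ with $i[id]>j[id]>k[id]>l[id]$, starting with $k,l \in \mathcal{I}$ and $i,j \notin \mathcal{I}$. With fresh reads, $j$ is \imped (the tie-break passes because $i$ is not addable: its neighbour $k$ is $IN$), so $j$ moves $IN$; concurrently $k$ is removable (because of $l$) and moves $OUT$. Now $i$ evaluates its guard reading a stale $OUT$ for $j$ and the fresh $OUT$ for $k$: it sees itself addable, and since both of its neighbours have lower IDs the tie-break is vacuously satisfied, so $i$ moves $IN$. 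The result is the adjacent pair $i,j$ both $IN$ --- an $F_2$ action has broken feasibility under stale reads. What actually saves the algorithm is not that such violations cannot occur, but that each violation re-enables $F_1$ (the offending nodes become removable) and that under the AMR model a node cannot keep rereading old values forever, so the interference between $F_1$ and $F_2$ is bounded; this is the ``limited interference'' mechanism the paper invokes (informally) in \Cref{subsection:ds-eventual}, and it is a different mechanism from the invariant you assert. Since you yourself identify this verification as the crux and then rest it on a false invariant (while leaving the careful check undone), the asynchrony portion of your argument does not go through; the round-based portion does.
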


\begin{proof}
    In an arbitrary non-feasible state (where the input graph $G$ does not manifest a valid \is), there is at least one node that is removable. A removable node immediately executes the first instruction of \Cref{algorithm:rules-mis} and moves out of the \is. This implies that by the end of the first round, we obtain a valid (possibly non-minimal) \is.

    If the input graph $G$ is in a feasible, but not optimal, state (where $G$ manifests a non-minimal \is), then there is at least one addable node. This implies that there is at least one \imped node $i$ in that state (e.g., the addable node with the highest ID).
    Under \Cref{algorithm:rules-mvc}, any node in $Adj_i$ will not execute until $i$ changes its state. $i$ is addable because all nodes in $Adj_i$, along with $i$, are out of the independent set. Thus $i$ must execute so that it becomes non-addable. This shows that  the second rule in \Cref{algorithm:rules-mis} is lattice-linear.

    Since in a non-minimal, but valid, independent set, there is at least one node that is \imped, we have that with every move, the size of the independent set, manifested by $G$, reduces by 1. Also, notice that when an \imped node $i$ changes its state, no node in $Adj_i$ changes its state simultaneously. Thus, the validity of the independent set is not impacted when $i$ moves. Therefore, we have that \Cref{algorithm:rules-mis} is self-stabilizing.

    When $G$ manifests a maximal independent set, no node is removable or addable. This shows that \Cref{algorithm:rules-mis} is silent.
\end{proof}
}

In \Cref{algorithm:rules-mis}, the definition of \textsc{Addable} relies only on the information about distance-2 neighbours. Hence, the evaluation of guards take $O(\Delta^2)$ time. This algorithm converges in $2n$ moves (or more precisely 1 round plus $n$ moves).

\section{Applying ELLSS in Colouring}\label{section:gc}

In this section, we extend ELLSS algorithms to graph colouring. 
In the \textit{graph colouring} (GC) problem, the input is a graph $G$ and the task is to (re-)assign colours to all the nodes such that no two adjacent nodes have the same colour.

Unlike \mvc, \mds or \mis, colouring does not have a binary domain. Instead, we correspond the equivalence of changing the state to $IN$ to the case where a node sets its colour to $i[id]+n$. And, the equivalence of changing the state to $OUT$ corresponds to the case where a node decreases its colour.

The proposition $\mathcal{P}^\prime_c$ defining a feasible state and the proposition $\mathcal{P}_c$ defining an optimal state is defined below. $\mathcal{P}_c$ is true when all the nodes have lowest available colour, that is, for any node $i$ and for all colours $c$ in $[1:i[colour]-1]$, $c$ equals the colour of one of the neighbours $j$ of $i$.

\begin{center}
    $\mathcal{P}_c^\prime(G)\equiv \forall i\in V(G),\forall j\in Adj_i:i[colour]\neq j[colour]$.\\
    $\mathcal{P}_c(G)\equiv \mathcal{P}_c^\prime\land (\forall i\in V(G):(\forall c\in [1:i[colour]-1]:(\exists j\in Adj_i: j[colour]=c)))$.
\end{center}

We define the macros as shown in the following table. A node $i$ is \textit{conflicted} if it has a conflicting colour with at least one of its neighbours. $i$ is subtractable if there is a colour value less than $i[colour]$ that $i$ can change to without a conflict with any of its neighbours. $i$ is \textit{\imped} if $i$ is not conflicted, and it is the highest ID node that is subtractable.
\begin{center}
    \begin{tabular}{|l|}
        \hline
        \textsc{Conflicted}($i)\equiv \exists j \in Adj_i:j[colour]=i[colour]$.\\
        \textsc{Subtractable}($i)\equiv \exists c\in [1:i[colour]-1]: \forall j\in Adj_i: j[colour]\neq c$.\\
        \textsc{Impedensable-GC}$(i)\equiv\lnot$\textsc{Conflicted}($i$) $\land$ \textsc{Subtractable}$(i)\land$\\
        \quad\quad $(\forall j\in V(G):\lnot\textsc{Conflicted}(j)\land(j[id]<i[id]\lor \lnot$\textsc{Subtractable}$(j)))$.\\
        \hline
    \end{tabular} 
\end{center}

Unlike SDMDS, \mvc and \mis, in graph colouring (GC), each node is associated with a variable $colour$ that can take several possible values (the domain can be as large as the set of natural numbers). 
As mentioned above, the action of setting a colour value to $i[colour]+i[id]$ is done whenever a conflict is detected. 
Effectively, this is like setting the colour to an error value such that the error value of every node is distinct in order to avoid a conflict. This error value will be reduced when node $i$ becomes \imped and decreases its colour.

The actions of the algorithm are shown in \Cref{algorithm:rules-c}. If a node $i$ is \imped, then it changes its colour to the minimum possible colour value. If $i$ is conflicted, then it changes its colour value to $i[colour]+i[id]$.

\begin{algorithm}\label{algorithm:rules-c}Rules for node $i$.
    \begin{center}
        \begin{tabular}{|l|}
            \hline
            \textsc{Conflicted-GC}$(i)$ $\longrightarrow i[colour]=i[colour]+i[id]$.\\
            \textsc{Impedensable-GC}$(i)\longrightarrow$\\
            \quad\quad $i[colour]=\min\limits_{c}\{c\in [1:i[colour]-1]:(\forall j\in Adj_i: j[colour]\neq c)\}$.\\
            \hline
        \end{tabular}
    \end{center}
\end{algorithm}

\Cref{algorithm:rules-c} is an ELLSS algorithm: according to \Cref{definition:ellssa}, $F_1$ corresponds to the first action of \Cref{algorithm:rules-mvc}, $F_2$ corresponds to its second action, and and $S_f$ is the set of the states for which $\mathcal{P}_c^\prime$ holds true.
Thus, starting from any arbitrary state, the algorithm eventually reaches a state where no two adjacent nodes have the same colour and no node can reduce its colour.

\arya{
\begin{lemma}
    \Cref{algorithm:rules-c} is a silent eventually lattice-linear self-stabilizing algorithm for graph colouring.
\end{lemma}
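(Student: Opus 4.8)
The plan is to verify the three conditions of \Cref{definition:ella} together with the self-stabilization and silence requirements of \Cref{definition:ellssa}, reusing the template of the \mvc and \mis lemmas but accounting for the two features that make \gc harder: the colour domain is unbounded and the conflict-resolution rule is not monotone. I take $F_1$ to be the \textsc{Conflicted-GC} rule, $F_2$ to be the \textsc{Impedensable-GC} rule, and $S_f=\{s:\mathcal{P}_c^\prime(s)\}$, the set of properly coloured states. Since $\lnot\textsc{Conflicted}(i)$ holds for every $i$ exactly when $\mathcal{P}_c^\prime$ is true, $F_1$ is disabled throughout $S_f$, which is condition (b); conditions (a) and (c) are the substance.

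For condition (a) I must show that every computation reaches $S_f$ and that $S_f$ is stable. Stability is the easy direction: in $S_f$ no node is \textsc{Conflicted}, so $F_1$ is disabled, and a node executing $F_2$ reassigns its colour to a value used by none of its neighbours, so the move cannot create a conflict and $\mathcal{P}_c^\prime$ is preserved (under AMR, tolerance to stale reads follows from the lattice-linearity of $F_2$ established below, via the remark that an old read of $s$ returns some $s'\prec s$). Reaching $S_f$ is the main obstacle. Unlike the monotone first phases of \Cref{algorithm:rules-mvc} and \Cref{algorithm:rules-mis}, a node firing $F_1$ raises its colour to $i[colour]+i[id]$ and may become conflicted again, so a one-round argument does not apply. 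I would instead bound the number of $F_1$-moves by a monovariant argument: each firing shifts the colour by the \emph{distinct} identifier $i[id]$, so two adjacent nodes that collide cannot sustain a mutual conflict after the responsible endpoint moves, and a potential counting the conflicting edges drives the conflict phase to termination within $O(m)$ moves (consistent with the $4m$ term in the stated bound $n+4m$).

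With $S_f$ reached, condition (c) requires $F_2$ to be a lattice-linear algorithm, which I argue as for \mvc and \mis. In a feasible but non-optimal state, $\mathcal{P}_c^\prime$ holds while $\mathcal{P}_c$ fails, so some node is \textsc{Subtractable}; the globally highest-ID subtractable node then satisfies \textsc{Impedensable-GC}, because its guard requires every node to be non-conflicted (true in $S_f$) and no higher-ID subtractable node to exist. Such a node is genuinely \imped: while it retains its colour, a smaller free colour exists for it, so $\mathcal{P}_c$ cannot hold, and this remains so in every state above the current one in which its colour is unchanged. The guard's global clause ensures at most one node (the maximum-ID subtractable one) is \imped at a time, so a moving node has no neighbour moving simultaneously; since $F_2$ decreases the colour to a neighbour-free value, $\mathcal{P}_c^\prime$ is maintained and each move strictly reduces one node's colour. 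Ordering each node's visited colours by decreasing value gives the total order $\prec_l$, hence the induced $\prec$-lattice of \Cref{definition:<-lattice}, and $\mathcal{P}_c$ is lattice-linear with respect to it; its supremum is a state where no node is subtractable, i.e.\ an optimal colouring, which gives self-stabilization.

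Finally, silence follows because in an optimal state $\mathcal{P}_c$ is true, so no node is \textsc{Conflicted} (disabling $F_1$) and no node is \textsc{Subtractable} (disabling $F_2$, since $\lnot\textsc{Subtractable}(i)$ is exactly the optimality clause of $\mathcal{P}_c$), hence no node is enabled. I expect the only real work to be in condition (a): proving that the non-monotone conflict phase terminates and that $S_f$ is stable under stale reads, whereas the lattice-linearity of $F_2$ and the silence argument transfer almost verbatim from the \mvc and \mis arguments.
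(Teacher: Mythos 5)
Your proposal is correct and follows essentially the same route as the paper's proof: the same decomposition ($F_1$ as the \textsc{Conflicted-GC} rule, $F_2$ as the \textsc{Impedensable-GC} rule, $S_f$ the states satisfying $\mathcal{P}_c^\prime$), the same argument that the highest-ID subtractable node is the impedensable one (giving lattice-linearity of $F_2$ and, since its moves preserve $\mathcal{P}_c^\prime$ and strictly decrease its colour, self-stabilization), and the same silence argument. The only divergence is in the conflict phase, where you aggregate an edge-counting potential to get $O(m)$ moves while the paper bounds each node's moves by $deg(i)$; these are the same argument in per-edge versus per-node form, and both are equally informal about the possibility that raising a colour to $i[colour]+i[id]$ creates \emph{new} conflicts -- though you at least flag this non-monotonicity explicitly, which the paper's proof glosses over.
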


\begin{proof}
    In an arbitrary non-feasible state (where the input graph $G$ does not manifest a valid colouring), there is at least one node that is conflicted. A conflicted node immediately executes the first instruction of \Cref{algorithm:rules-c} and makes its colour equal to its $ID$ plus its colour value. Since the value $i[colour]+i[id]$ by which a node updates its colour value will resolve such conflict with one adjacent node in 1 move, $i$ will become non-conflicted in almost $deg(i)$ moves.

    If the input graph $G$ is in a feasible, but not optimal, state (where $G$ manifests a valid colouring but some nodes can reduce their colour), then there is at least one subtractable node. This implies that there is an \imped node $i$ in that state (the subtractable node with the highest ID).
    Under \Cref{algorithm:rules-c}, any node will not execute until $i$ changes its state. $i$ is subtractable because there is a colour value $c$ less that $i[colour]$ such that no node in $Adj_i$ has that colour value. Thus $i$ must execute to become non-subtractable. This shows that the second rule in \Cref{algorithm:rules-c} is lattice-linear.

    Since in a non-minimal, but valid, colouring, there is at least one node $i$ that is \imped, we have that a node will become non-subtractable in atmost $deg(i)$ moves. Notice that when an \imped node $i$ changes its state, no node changes its state simultaneously. Also, the reduced colour will not have a conflict with any other node. Thus, no conflicts arise. Therefore, we have that \Cref{algorithm:rules-c} is self-stabilizing.

    When $G$ manifests a valid non-subtractable colouring, no node is removable or addable. This shows that \Cref{algorithm:rules-c} is silent.
\end{proof}
}

In \Cref{algorithm:rules-c}, the definition of \textsc{\imped} relies only on the information about distance-2 neighbours. Hence, the evaluation of guards take $O(n)$ time. This algorithm converges in $2m + (n+2m)=n+4m$ moves.

\section{Applying ELLSS in 2-Dominating Set problem}\label{section:2ds}

The 2-dominating set (2DS) problem 
provides a stronger form of dominating set (DS), as compared to the usual \mds problem. In the \textit{2-dominatind set} problem, the input is a graph $G$ with nodes having domain $\{IN,OUT\}$. The task is to compute a set $\mathcal{D}$ where some node $i\in \mathcal{D}$ iff $i[st]=IN$; $\mathcal{D}$ must be computed such that there are no two nodes $j,k\in V(G)$ that are in $\mathcal{D}$, and a node $i\in V(G)$ that is not in $\mathcal{D}$, such that $\mathcal{D}\cup\{i\}\setminus \{j,k\}$ is a valid \ds.

Unlike the SDMDS, \mvc, \mis or GC problems that simply study the condition of their immediate neighbours before they change their state, and after they would change their state, the 2-DS problem looks one step further.
Specifically, the usual \mds or \mvc problems investigate the computation of any minimal DS or VC respectively, whereas the 2DS problem requires the computation of such a DS where it must not be the case that another valid DS can be computed while removing two nodes from it and adding one node to it.

The propositions $\mathcal{P}^\prime_{d}$ defines a \ds, $\mathcal{P}_d$ defines an \mds and $\mathcal{P}_{2d}$ defines an optimal state, obtaining a \tds. These propositions are defined below.

\begin{center}
    $\mathcal{P}_{d}^\prime(\mathcal{D})$
    $\equiv \forall i\in V(G):i\in\mathcal{D}\lor (\exists j\in Adj_i:j\in\mathcal{D})$.\\
    $\mathcal{P}_{d}(\mathcal{D})$
    $\equiv\mathcal{P}_{d}'(\mathcal{D})\land(\forall(i\in V(G):\lnot\mathcal{P}_d(D\setminus \{i\})))$.\\
    $\mathcal{P}_{2d}(\mathcal{D})$
    $\equiv \mathcal{P}_{d}(\mathcal{D})  
    \land \neg (\exists i\in V(G),i\not\in\mathcal{D}:$\\$(\exists j, k\in Adj_i, j\in\mathcal{D},k\in\mathcal{D} : \mathcal{P}'_{d}(\mathcal{D} \cup \{i\}\setminus\{j, k\})))$ 
\end{center}

\arya{Our algorithm is based on the following intuition: Let $\mathcal{D}$ be an \mds. If there exists nodes $i, j$ and $k$ such that $j, k \in \mathcal{D}$ and $i \not \in \mathcal{D}$, and $\mathcal{D} \cup \{i\} - \{j, k\}$ is also a \ds, then $j$ and $k$ must be neighbors of $i$.}

The macros that we utilize are as follows. A node $i$ is \textit{addable} if $i[st]=OUT$ and all the neighbours of $i$ are also out of the DS. $i$ is \textit{removable} if $i[st]=IN$ and there exists at least one neighbour of $i$ that is also in the DS. A node $i$ is \textit{2-addable} if $i[st]=OUT$ there exist nodes $j$ and $k$ in the distance-2 neighbourhood of $i$ where $j[st]=IN$ and $k[st]=IN$ such that $j$ and $k$ can be removed and $i$ can be added to the DS such that $j,k$ and their neighbours stay dominated. A node is \textit{unsatisfied} if it is removable or 2-addable. A node is \textit{\imped} if it is the highest id node in its distance-4 neighbourhood that is unsatisfied.

\begin{center}
    \begin{tabular}{|l|}
        \hline 
        $\textsc{Addable-2DS}\equiv i[st]=OUT\land (\forall j\in Adj_i:j[st]=OUT)$.\\
        $\textsc{Removable-2DS}(i)\equiv i[st]=IN\land (\forall j\in Adj_i\cup\{i\}:((j\neq i\land j[st]=IN)$\\
        \quad\quad\quad\quad $\lor$ $(\exists k\in Adj_j, k\neq i: k[st]=IN)))$.\\
        $\textsc{Two-Addable-2DS}(i) \equiv i[st]=OUT\land(\forall j\in Adj^2_i\cup\{i\}:$\\
        \quad\quad\quad\quad $\lnot (\textsc{Addable-2DS}(j)\lor\textsc{Removable-2DS}(j))) \land$\\
        \quad\quad\quad\quad $(\exists j,k\in Adj^2_i, j[st]=IN, k[st]=IN:$\\
        \quad\quad\quad\quad $(\forall q\in Adj_j\cup Adj_k\cup\{j,k\}:(\exists r\in Adj_q:r[st]=IN\lor r=i)))$.\\
        $\textsc{Unsatisfied-2DS}(i)\equiv \textsc{Removable-2DS}(i)\lor \textsc{Two-Addable-2DS}(i)$.\\
        $\textsc{\Imped-2DS}(i) \equiv \textsc{Unsatisfied-DS}(i) \land(\forall j\in Adj^4_i:$\\
        \quad\quad\quad\quad $(\lnot \textsc{Unsatisfied-DS}(j) \lor i[id]>j[id]))$.\\
        \hline 
    \end{tabular}
\end{center}

The algorithm for the 2-dominating set problem is as follows.
If a node $i$ is addable, then it turns itself in the DS, ensuring that $i$ and all it neighbouring nodes it stay dominated. 
As stated above, a node is \imped then it is either removable or 2-addable.
If a node is \imped and removable, then it turns itself out of the DS, ensuring that $i$ is not such a node that is not needed in the DS, but is still present in the DS. If $i$ is \imped and 2-addable, then there are two nodes $j$ and $k$ in the DS such that $j$ and $k$ can be removed, and $i$ can be added, and the resulting DS is still a valid DS. In this case, $i$ moves into the DS, and moves $j$ and $k$ out of the DS.

\begin{algorithm}\label{algorithm:rules-2ds-v1}Rules for node $i$.
$$
    \begin{array}{|l|}
        \hline 
        \textsc{Addable-2DS}(i)\longrightarrow i[st]= IN.\\
        \textsc{\Imped-2DS}(i)\longrightarrow\\
        \begin{cases}
            i[st]=OUT. & \text{if $i[st]=IN$.}\\
            {j[st]=OUT, k[st]=OUT, i[st]=IN.} & \text{if $i[st]=OUT$.}
        \end{cases}~\\
        // \text{The reference to $j$ and $k$ is from the definition of $\textsc{Two-Addable}(i)$}\\
        \hline 
    \end{array}
$$
\end{algorithm}

This is an ELLSS algorithm that works in three phases: first, every node $i$ checks if it is addable. If $i$ is not addable, then $i$ checks if it is \imped and removable, providing a minimal DS. And finally, $i$ checks if it is \imped and 2-addable, providing a 2DS.
Thus, this algorithm satisfies the conditions in \Cref{definition:ellssa}, where $F_1$ constitutes of the first action of \Cref{algorithm:rules-2ds-v1}, $F_2$ corresponds to its second action, and $S_f$ is the set of the states for which $\mathcal{P}_{d}^\prime$ holds true.
Thus, starting from any arbitrary state, the algorithm eventually reaches a state where $\mathcal{D}$ is 2-dominating set.

\arya{
\begin{lemma}
    \Cref{algorithm:rules-mvc} is a silent eventually lattice-linear self-stabilizing algorithm for 2-dominating set.
\end{lemma}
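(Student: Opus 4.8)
The plan is to mirror the four-part structure used for \Cref{algorithm:rules-mvc}, \Cref{algorithm:rules-mis}, and \Cref{algorithm:rules-c}, while devoting the bulk of the effort to the two-out-one-in swap move that distinguishes \tds from the earlier problems. Throughout I would take $F_1$ to be the \textsc{Addable-2DS} rule, $F_2$ to be the \textsc{Impedensable-2DS} rule of \Cref{algorithm:rules-2ds-v1}, and $S_f$ to be the set of states satisfying $\mathcal{P}_d'$, then verify the three clauses of \Cref{definition:ellssa} together with silence and self-stabilization.

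First I would establish clauses (a) and (b): that $S_f$ is reached and is stable, and that $F_1$ is disabled inside $S_f$. In any state with $\lnot\mathcal{P}_d'$, some node $i$ has $i[st]=OUT$ while every neighbour is also $OUT$; such a node is \textsc{Addable-2DS} and, by the first rule, moves $IN$. Since $F_1$ only adds nodes and a node stops being addable as soon as one neighbour is $IN$, within one round $\mathcal{D}$ becomes a valid \ds and $F_1$ is permanently disabled thereafter (in a valid \ds no $OUT$ node has all neighbours $OUT$). Stability of $S_f$ then reduces to showing $F_2$ preserves $\mathcal{P}_d'$, which I address in the third step.

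Next I would argue lattice-linearity of $F_2$ (clause (c)). In a feasible but non-optimal state, $\lnot\mathcal{P}_{2d}$ forces the existence of a node that is \textsc{Removable-2DS} or \textsc{Two-Addable-2DS}, i.e. \textsc{Unsatisfied-2DS}; the highest-id unsatisfied node inside any distance-$4$ ball is \textsc{Impedensable-2DS}, so every suboptimal feasible state has an \imped node in the sense of \Cref{definition:impedensable-node}. The maximality clause $\forall j\in Adj^4_i:(\lnot\textsc{Unsatisfied-2DS}(j)\lor i[id]>j[id])$ freezes every unsatisfied node in the distance-$4$ ball of an \imped node until that node moves, which is exactly the ingredient that induces a total order on the local states each node visits and hence a $\prec$-lattice as required by \Cref{definition:ll-algos}; the standard stale-read argument then applies, since reading an old value only moves the perceived global state down the lattice and cannot falsely disable an \imped node.

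The third step is where the \tds-specific work lies: verifying that an \imped move preserves $\mathcal{P}_d'$ and that two \imped moves cannot interfere. For a removable $i$, \textsc{Removable-2DS} certifies that $i$ and each neighbour keep a dominator after $i$ leaves; for a $2$-addable $i$, the final conjunct of \textsc{Two-Addable-2DS} checks that every $q\in Adj_j\cup Adj_k\cup\{j,k\}$ retains a dominator (or is $i$ itself) after $j,k$ leave and $i$ enters, so the swap keeps $\mathcal{D}$ dominating. The footprint of $i$'s move touches $j,k$ at distance $2$ and the domination of their neighbours at distance $3$, so any node whose correctness could be disturbed lies within distance $4$ of $i$; the distance-$4$ maximality clause then guarantees no second \imped node acts inside this footprint concurrently. \textbf{I expect this independence-of-swaps argument to be the main obstacle}: I must show the $Adj^4_i$ radius is both sufficient (two \imped nodes chosen in disjoint distance-$4$ balls have non-overlapping footprints, so even simultaneous or old-value execution yields a valid \ds) and essentially necessary (a smaller radius could let two swaps strip a common dominator). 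Once feasibility-preservation and non-interference are secured, convergence follows from the potential $|\mathcal{D}|$: every $F_2$ move is either a single removal or a two-out-one-in swap, each strictly decreasing $|\mathcal{D}|$ while maintaining $\mathcal{P}_d'$, so the process terminates; at termination no node is addable, removable, or $2$-addable, which together with $\mathcal{P}_d'$ yields $\mathcal{P}_{2d}$ and establishes silence.
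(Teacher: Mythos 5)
Your overall plan coincides with the paper's own proof: the same $F_1$/$F_2$/$S_f$ decomposition, feasibility within one round via \textsc{Addable-2DS}, lattice-linearity of $F_2$ via the highest-ID unsatisfied node together with the distance-4 freezing clause, convergence by the potential $|\mathcal{D}|$ dropping by one per move, and silence once $\mathcal{P}_{2d}$ holds. The divergence is at exactly the step you flagged as the main obstacle, and there the gap is real: your claim that ``any node whose correctness could be disturbed lies within distance 4 of $i$; the distance-4 maximality clause then guarantees no second \imped node acts inside this footprint'' does not close arithmetically. With swap partners $j,k\in Adj^2_i$ (as the macro permits), the nodes whose domination must be re-certified lie at distance up to $3$ from $i$, and the witnesses $r$ that certify them lie at distance up to $4$ from $i$. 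A second \imped node $i'$ is only guaranteed to satisfy $dist(i,i')\geq 5$, while its own swap removes nodes at distance up to $2$ from $i'$. Since $4+2\geq 5$, nothing in the freezing clause prevents $i'$ from removing precisely the witness $r$ that $i$'s swap relies on (e.g., $dist(i,r)=4$ and $dist(r,i')\leq 2$); the two concurrent moves can then leave a node $q\in Adj_j$ with no dominator, violating $\mathcal{P}_{d}^\prime$. So non-interference is asserted, not proved, and as you have set it up it cannot be proved from the freezing clause alone.

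The paper closes this hole with an additional geometric observation, stated as intuition before \Cref{algorithm:rules-2ds-v1} and invoked inside its proof: if the current set is a valid dominating set and $\mathcal{D}\cup\{i\}\setminus\{j,k\}$ is also dominating, then $j$ and $k$ must be neighbours of $i$. That observation shrinks the removal radius to $1$ and the witness radius to $3$, so any interference between two \imped nodes would force $dist(i,i')\leq 3+1=4<5$, contradicting the freezing clause. Your proof needs this lemma (or some substitute that bounds the sum of the removal and dependency radii strictly below the freezing radius); without it, the independence-of-swaps step you correctly identified as critical remains open. The remainder of your argument --- feasibility in one round, existence of an \imped node in every suboptimal feasible state, the potential-function convergence, and silence --- matches the paper's proof and is fine.
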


\begin{proof}
    In an arbitrary non-feasible state (where the input graph $G$ does not manifest a valid \ds), there is at least one node that is addable. An addable node immediately executes the first instruction of \Cref{algorithm:rules-2ds-v1} and moves in the \ds. This implies that by the end of the first round, we obtain a valid (possibly non-minimal) \ds.
    
    If the input graph $G$ is in a feasible, but not optimal, state (where $G$ manifests a non-minimal \ds), then there is at least one node that is removable or 2-addable. This implies that there is at least one \imped node $i$ in that state (e.g., the node, which is removable or 2-addable, with the highest ID).
    Under \Cref{algorithm:rules-mvc}, any node in $Adj^4_i$ will not execute until $i$ changes its state. If $i$ is removable, then all its neighbours are being dominated by a node other than $i$. If $i$ is 2-addable, then there exists a pair of nodes $j$ and $k$ such that if $j$ and $k$ can move out and $i$ moves in, then all nodes in $Adj_i$, $Adj_j$ and $Adj_k$ will stay dominated, including $i$, $j$ and $k$. Thus $i$ must execute so that it becomes non-\imped. This shows that  the second rule in \Cref{algorithm:rules-2ds-v1} is lattice-linear.

    Notice that if an arbitrary node $j$ and $k$ can move out of the \ds given that all nodes stay dominated if $i$ moves in, then $j$ and $k$ must be the neighbours of $i$. This is assuming that $G$ is in a valid dominating set. Otherwise, it cannot be guaranteed that $i$ can dominate the nodes that only $j$ or $k$ are dominating.
    
    Since in a non-minimal, but valid, \ds, there is at least one node that is removable \imped, we have that with every move of a removable \imped node, the size of the \ds, manifested by $G$, reduces by 1.
    Now assume that $G$ manifests a \ds such that no node is addable or removable. Here, if $G$ does not manifest a 2-dominating set, then, from the discussion from the above paragraph there must exist at least one set of three nodes $i$, $j$ and $k$ such that $j$ and $k$ can move out and $i$ can move in guaranteeing that all nodes in $Adj_i$, $Adj_j$ and $Adj_k$ stay dominated, including $i$, $j$ and $k$. With every move of a 2-addable \imped node, the size of the \ds, manifested by $G$, reduces by 1.
    Also, notice that when an \imped node $i$ changes its state, no node in $Adj^4_i$ changes its state simultaneously. Thus, the validity of the \ds is not impacted when $i$ moves. Therefore, we have that \Cref{algorithm:rules-2ds-v1} is self-stabilizing.
    
    When $G$ manifests a 2-dominating set, no node is addable, removable or 2-addable. This shows that \Cref{algorithm:rules-2ds-v1} is silent.
\end{proof}
}

Note that in \Cref{algorithm:rules-2ds-v1}, the definition of \textsc{Removable} relies on the information about distance-2 neighbours, and consequently, the definition of \textsc{Two-Addable} relies on the information about distance-4 neighbours. Hence, because of the time complexity of evaluating if a node is \imped, the guards take $O(\Delta^8)$ time. 
This algorithm converges in $3n$ moves (or more precisely 1 round plus $2n$ moves).

In this algorithm, one of the actions is changing the states of 3 processes at once. 
However, it can be implemented in a way that a process changes its own state only. We sketch how this can be done as follows.
To require that a process only changes its own state, we will need additional variables so processes know that they are in the midst of an update where $i$ needs to add itself to $\mathcal{D}$ and $j$ and $k$ need to remove themselves from $\mathcal{D}$.
Intuitively, it will need a variable of the form $getout.i$ which will be set to $\{j, k\}$ to instruct $j$ and $k$ to leave the dominating set. 
When $j$ or $k$ are in the midst of leaving the dominating set, all the nodes in $Adj_i^6$ will have to wait until the operation is completed.
With this change, we note that the algorithm will not be able to tolerate incorrect initialization of $getout.i$ while preserving lattice-linearity. 

\section{Related Work}\label{section:literature}

\noindent \textbf{Lattice-Linearity}: Garg (2020) \cite{Garg2020} studied the exploitation of lattice-linear predicates in several problems to develop parallel processing algorithms. Lattice-linearity ensures convergence of the system to an optimal state while the nodes perform executions asynchronously, and are allowed to perform executions based on the old values of other nodes. Problems like the stable marriage problem, job scheduling, market clearing price and others are studied in \cite{Garg2020}. In \cite{Garg2021} and \cite{Garg2022}, the authors have studied lattice-linearity in, respectively, housing market problem and several dynamic programming problems. These papers study problems which possess a predicate (called a lattice-linear predicate) under which the global states form a lattice, and where the system needs to be initialized in a specific initial state, and hence does not support self-stabilization.

The problems that we study in this paper are the problems which do not possess any predicate under which the global states form a lattice. In addition the algorithms that we present are self-stabilizing and thus converge to an optimal state from an arbitrary state.

\noindent \textbf{Minimal dominating set and its generalizations}: Self-stabilizing algorithms for the minimal dominating set (\mds) problem, and its variations, have been proposed in several works in the literature, for example, in \cite{Xu2003, Hedetniemi2003, Turau2007, GODDARD2008, Chiu2014, Fink1985, Kobayashi2017}. The best convergence time among these works is $4n$ moves.

The minimal $k$-dominating set problem was studied in \cite{Fink1985}; here, the task is to compute a minimal set of nodes $\mathcal{D}$ such that for each node $v\in V(G)$, $v\in\mathcal{D}$ or there are at least $k$ neighbours of $v$ in $\mathcal{D}$. A generalization of the \mds is described in \cite{Kobayashi2022}, where the input includes wish sets for every node. For each node $i$, $i$ should be in the dominating set $\mathcal{D}$ or one of its wish set must be a subset of $\mathcal{D}$. This algorithm converges in $O(n^3m)$ steps and $O(n)$ rounds; time complexity of the evaluation of guards is exponential in the degree of the nodes.

The problem that we study is the service demand based minimal dominating set problem, which is a more practical generalization of \mds. The algorithm that we propose is self-stabilizing, converges in 1 round plus $n$ moves (within $2n$ moves), and does not require a synchronous environment, which is an improvement over the existing literature.
In addition, evaluation of guards takes only $O(\Delta^4)$ time, which is better than the algorithm presented in \cite{Kobayashi2022}.

\noindent \textbf{Minimal vertex cover}: Self-stabilizing algorithms for the minimal vertex cover problem
have been studied in Astrand and Suomela (2010) \cite{Astrand2010} that converges in $O(\Delta)$ rounds, and Turau (2010) \cite{TURAU2010} that converges in $O(\min\{n$, $\Delta^2$, $\Delta\log_3 n\})$ rounds.

The algorithm that we propose is self-stabilizing, converges in 1 round plus $n$ moves (within $2n$ moves), and does not require a synchronous environment.

\noindent \textbf{Maximal Independent Set}: Self-stabilizing algorithm for maximal independent set has been presented in \cite{Turau2007}, that converges in $\max\{3n-5, 2n\}$ moves under an unfair distributed scheduler, \cite{GODDARD2008} that converges in $n$ rounds under a distributed or synchronous scheduler, \cite{Hedetniemi2003} that converges in $2n$ moves.

The algorithm that we propose is self-stabilizing, converges in 1 round plus $n$ moves (within $2n$ moves), and does not require a synchronous environment.

\noindent\textbf{Colouring}: Self-stabilizing algorithms for decentralized (where nodes only read from their immediate neighbours) graph colouring have been presented in \cite{Bhartia2016,Checco2017,Duffy2013,Duffy2008,Galan2017,Leith2006,Motskin2009,Chakrabarty2020}. The best convergence time among these algorithms is $n\times \Delta$ moves, where $\Delta$ is the maximum degree of the input graph.

The algorithm that we propose is self-stabilizing, converges in $n+4m$ moves, and does not require a synchronous environment.

\textbf{A survey}: A survey of self-stabilizing algorithms on independence, domination and colouring problems can be found in \cite{Guellati2010}.

\noindent \textbf{2-Dominating set}: The 2-dominating set is not an extensively studied problem. The problem was introduced in \cite{Bollobas1990}. A self-stabilizing algorithm for the 2-dominating set problem has been studied in \cite{Maruyama2022}. This algorithm converges in $O(nD)$ rounds under a distributed scheduler, where $D$ is the diameter of $G$.

The algorithm present in this paper is self-stabilizing, converges in 1 round plus $2n$ moves (within $3n$ moves), and is tolerant to asynchrony.

\section{Experiments}\label{section:experiments}

In this section, we present the experimental results of time convergence of shared memory programs. We focus on the problem of maximal independent set (\Cref{algorithm:rules-mis}) as an example. 

\begin{figure}[ht]
    \centering
    \subfigure[]{
        \includegraphics[width=0.47\textwidth]{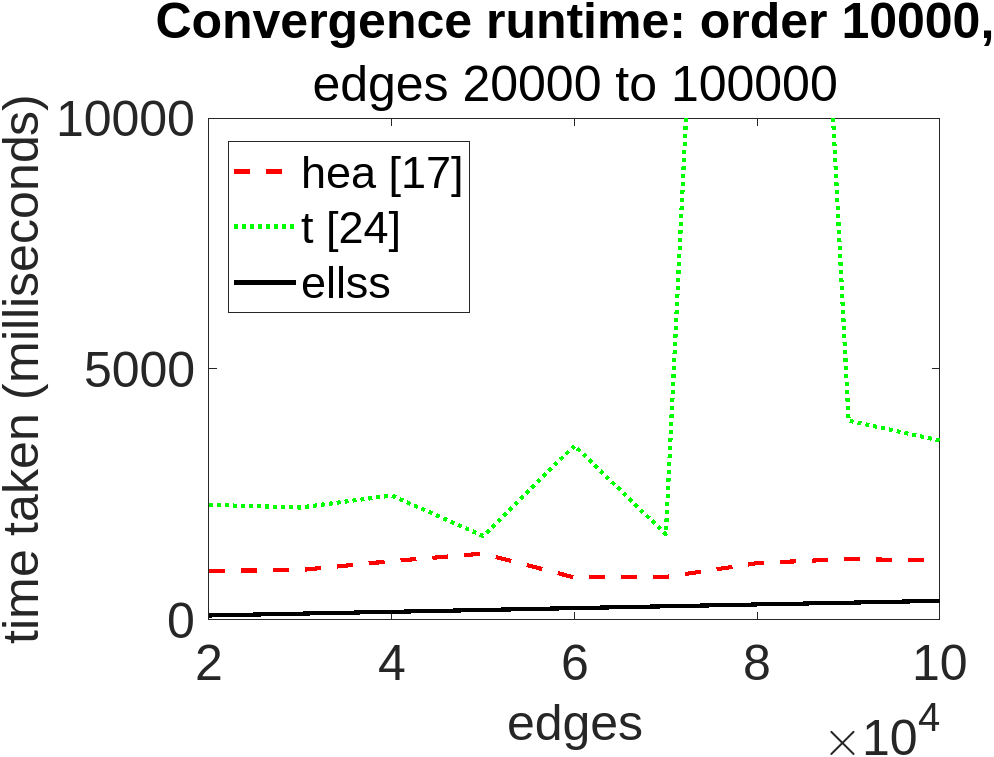}
    }
    \subfigure[]{
        \includegraphics[width=0.47\textwidth]{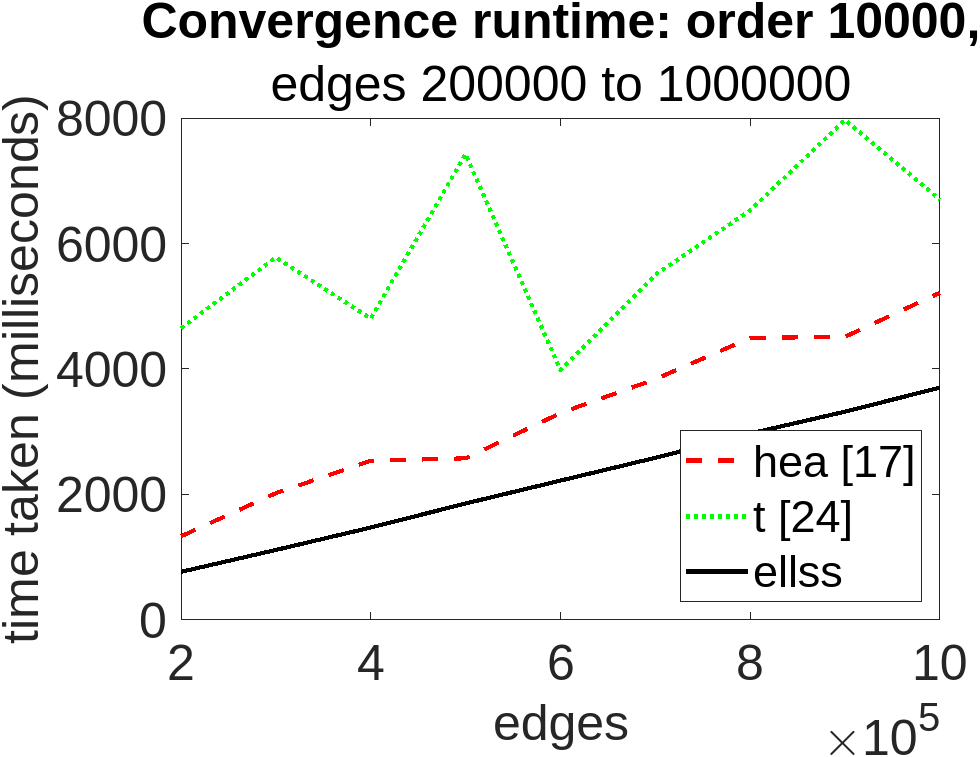}
    }
    \subfigure[]{
        \includegraphics[width=0.47\textwidth]{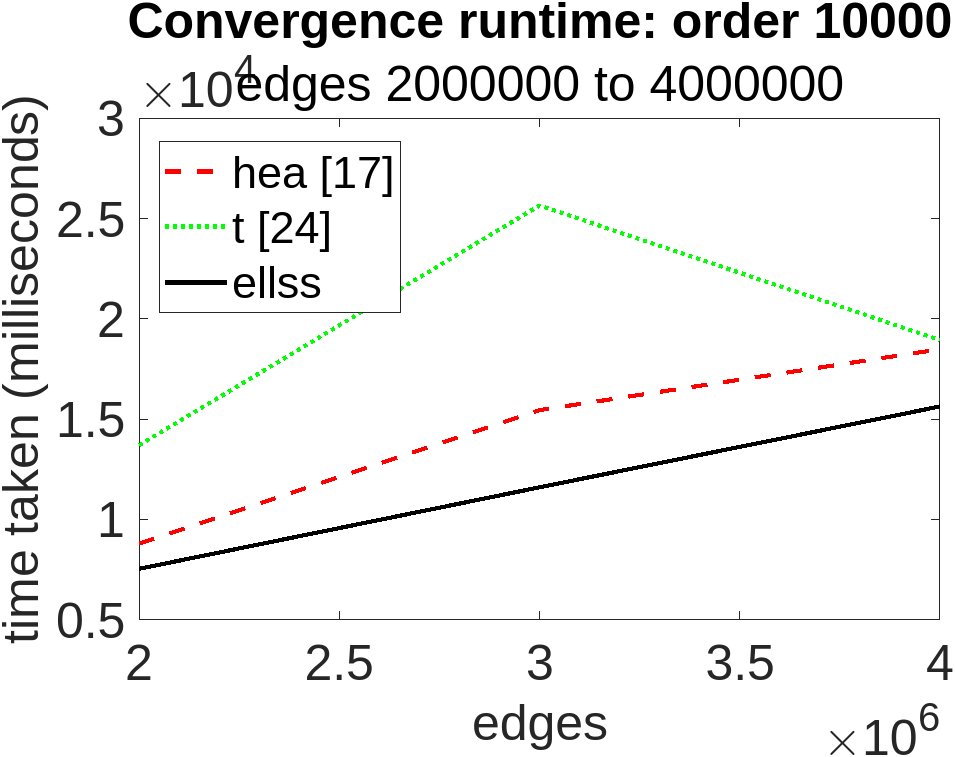}
    }
    \subfigure[]{
        \includegraphics[width=0.47\textwidth]{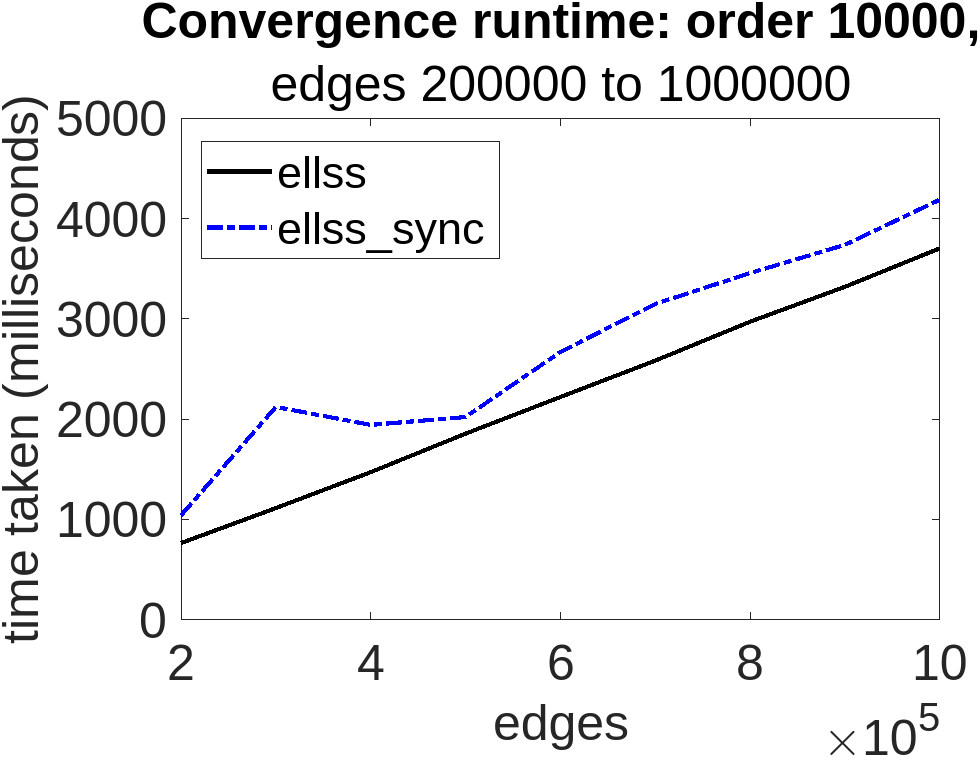}
    }
    \caption{Maximal Independent set algorithms convergence time on random graphs generated by \texttt{networkx} library of \texttt{python3}. All graphs are of 10,000 nodes. Comparision between runtime of \Cref{algorithm:rules-mis}, Hedetniemi et al. (2003) \cite{Hedetniemi2003} and Turau (2007) \cite{Turau2007} and synchronized \Cref{algorithm:rules-mis}. (a) 20,000 to 100,000 edges, \Cref{algorithm:rules-mis}, \cite{Hedetniemi2003} and \cite{Turau2007}. (b) 200,000 to 1,000,000 edges, \Cref{algorithm:rules-mis}, \cite{Hedetniemi2003} and \cite{Turau2007}. (c) 2,000,000 to 4,000,000 edges, \Cref{algorithm:rules-mis}, \cite{Hedetniemi2003} and \cite{Turau2007}. (d) 20,000 to 100,000 edges, \Cref{algorithm:rules-mis} and \Cref{algorithm:rules-mis} lockstep synchronized.}
    \label{figure:figure}
\end{figure}

We compare \Cref{algorithm:rules-mis} with the algorithms present in the literature for the maximal independent set problem. Specifically, we implemented the algorithms present in Hedetniemi et al. (2003) \cite{Hedetniemi2003} and Turau (2007) \cite{Turau2007}, and compare their convergence time. 
The input graphs were random graphs of order 10,000 nodes, generated by the \texttt{networkx} library of python. For comparing the performance results, all algorithms are run on the same set of graphs. 

The experiments are run on Cuda using the \texttt{gcccuda2019b} compiler. The program for \Cref{algorithm:rules-mis} was run asynchronously, and the programs in 
\cite{Hedetniemi2003} and \cite{Turau2007} are run under the required synchronization model.
The experiments are run on \texttt{Intel(R) Xeon(R) Platinum 8260 CPU @} \texttt{2.40} \texttt{GHz, cuda  V100S}. The programs are run using the command \texttt{nvcc $\langle$program$\rangle$.cu -G}. Here, each multiprocessor ran 256 threads. And, the system provided sufficient multiprocessors so that each node in the graph can have its own thread. All the observations are an average of 16 readings.

\Cref{figure:figure} (a) (respectively, \Cref{figure:figure} (b) and \Cref{figure:figure} (c)) shows a line graph comparision of the convergence time for these algorithms with the number of edges varying from 20,000 to 100,000  (respectively, 200,000 to 1,000,000 and 2,000,000 to 4,000,000). So, the average degree is varying from 4 to 20 (respectively, 40 to 200 and 400 to 800).
Observe that
the convergence time taken by the program for \Cref{algorithm:rules-mis} is significantly lower than the other two algorithms.

Next, we considered how much of the benefit of \Cref{algorithm:rules-mis} can be allocated to asynchrony due to the property of lattice-linearity. For this, we compared the performance of \Cref{algorithm:rules-mis} running in asynchrony (to allow nodes to read old/inconsistent values) and running in lock-step (to ensure that they only reads the most recent values). \Cref{figure:figure} (d) compares these results. We observe that the asynchronous implementation has lower convergence time.

We have performed the experiments on shared memory architecture that allows nodes to access all memory \textit{quickly}. This means that the overhead of synchronization is low. By contrast, if we had used a distributed system, where computing processors are far apart, the cost of synchronization will be even higher. Hence, the benefit of lattice-linearity (where synchronization is not needed) will be even higher.

\section{Conclusion}\label{section:conclusion}

We extended lattice-linearity from \cite{Garg2020} to the context of self-stabilizing algorithms. A key benefit of lattice-linear systems is that correctness is preserved even if nodes read old information about other nodes. However, the approach in \cite{Garg2020} relies on the assumption that the algorithm starts in  specific initial states, hence, it is not directly applicable in self-stabilizing algorithms. 

We began with the service demand based minimal dominating set (SDMDS) problem and designed a self-stabilizing algorithm for the same. Subsequently, we observed that it consists of two parts: One part makes sure that it gets the system to a state in $S_f$. The second part is a lattice-linear algorithm that constructs a minimal dominating set if it starts in some valid initial states, say a state in $S_f$. We showed that these parts have bounded interference, thus, they guarantee that the system stabilizes even if the nodes execute asynchronously. 

We defined the general structure of eventually lattice-linear self-stabilization to capture such algorithms. We demonstrated that it is possible to develop eventually lattice-linear self-stabilizing (ELLSS) algorithms for minimal vertex cover, maximal independent set, graph colouring and 2-dominating set problems. 

We also demonstrated that these algorithms substantially benefit from their ELLSS property. They outperform existing algorithms while they guarantee convergence without synchronization among processes.

Finally, as future work, an interesting direction can be to study which class of problems can the paradigm of ELLSS algorithms be extended to. Also, it is interesting to study if approximation algorithms for NP-Hard problems can be developed.

\bibliography{ella.bib}

\begin{thebibliography}{10}

\bibitem{Bhartia2016}
{\sc Bhartia, A., Chakrabarty, D., Chintalapudi, K., Qiu, L., Radunovic, B.,
  and Ramjee, R.}
\newblock Iq-hopping: Distributed oblivious channel selection for wireless
  networks.
\newblock In {\em Proceedings of the 17th ACM International Symposium on Mobile
  Ad Hoc Networking and Computing\/} (New York, NY, USA, 2016), MobiHoc '16,
  Association for Computing Machinery, p.~81–90.

\bibitem{Bollobas1990}
{\sc Bollobas, B., Cockayne, E.~J., and Mynhardt, C.~M.}
\newblock On generalised minimal domination parameters for paths.
\newblock {\em Discrete Mathematics 86\/} (1990), 89--97.

\bibitem{Chakrabarty2020}
{\sc Chakrabarty, D., and de~Supinski, P.}
\newblock {\em On a Decentralized $(\Delta+1)$-Graph Coloring Algorithm}.
\newblock Symposium on Simplicity in Algorithms (SOSA), SIAM, 2020, pp.~91--98.

\bibitem{Chase1995}
{\sc Chase, C.~M., and Garg, V.~K.}
\newblock Efficient detection of restricted classes of global predicates.
\newblock In {\em Distributed Algorithms\/} (Berlin, Heidelberg, 1995), J.-M.
  H{\'e}lary and M.~Raynal, Eds., Springer Berlin Heidelberg, pp.~303--317.

\bibitem{Checco2017}
{\sc Checco, A., and Leith, D.~J.}
\newblock Fast, responsive decentralized graph coloring.
\newblock {\em IEEE/ACM Transactions on Networking 25}, 6 (2017), 3628--3640.

\bibitem{Chiu2014}
{\sc Chiu, W.~Y., Chen, C., and Tsai, S.-Y.}
\newblock A 4n-move self-stabilizing algorithm for the minimal dominating set
  problem using an unfair distributed daemon.
\newblock {\em Information Processing Letters 114}, 10 (2014), 515--518.

\bibitem{Duffy2013}
{\sc Duffy, K.~R., Bordenave, C., and Leith, D.~J.}
\newblock Decentralized constraint satisfaction.
\newblock {\em IEEE/ACM Transactions on Networking 21}, 4 (2013), 1298--1308.

\bibitem{Duffy2008}
{\sc Duffy, K.~R., O'Connell, N., and Sapozhnikov, A.}
\newblock Complexity analysis of a decentralised graph colouring algorithm.
\newblock {\em Inf. Process. Lett. 107}, 2 (jul 2008), 60–63.

\bibitem{Fink1985}
{\sc Fink, J.~F., and Jacobson, M.~S.}
\newblock {\em N-Domination in Graphs}.
\newblock John Wiley \&amp; Sons, Inc., USA, 1985, p.~283–300.

\bibitem{Galan2017}
{\sc Gal{\'a}n, S.~F.}
\newblock Simple decentralized graph coloring.
\newblock {\em Computational Optimization and Applications 66}, 1 (Jan 2017),
  163--185.

\bibitem{Garg2022}
{\sc Garg, V.}
\newblock A lattice linear predicate parallel algorithm for the dynamic
  programming problems.
\newblock In {\em 23rd International Conference on Distributed Computing and
  Networking\/} (New York, NY, USA, 2022), ICDCN 2022, Association for
  Computing Machinery, p.~72–76.

\bibitem{Garg2020}
{\sc Garg, V.~K.}
\newblock Predicate detection to solve combinatorial optimization problems.
\newblock In {\em Proceedings of the 32nd ACM Symposium on Parallelism in
  Algorithms and Architectures\/} (New York, NY, USA, 2020), SPAA '20,
  Association for Computing Machinery, p.~235–245.

\bibitem{Garg2021}
{\sc Garg, V.~K.}
\newblock A lattice linear predicate parallel algorithm for the housing market
  problem.
\newblock In {\em Stabilization, Safety, and Security of Distributed Systems\/}
  (Cham, 2021), C.~Johnen, E.~M. Schiller, and S.~Schmid, Eds., Springer
  International Publishing, pp.~108--122.

\bibitem{GODDARD2008}
{\sc Goddard, W., Hedetniemi, S.~T., Jacobs, D.~P., Srimani, P.~K., and Xu, Z.}
\newblock Self-stabilizing graph protocols.
\newblock {\em Parallel Processing Letters 18}, 01 (2008), 189--199.

\bibitem{Guellati2010}
{\sc Guellati, N., and Kheddouci, H.}
\newblock A survey on self-stabilizing algorithms for independence, domination,
  coloring, and matching in graphs.
\newblock {\em J. Parallel Distrib. Comput. 70}, 4 (Apr. 2010), 406–415.

\bibitem{Gupta2021}
{\sc Gupta, A.~T., and Kulkarni, S.~S.}
\newblock Extending lattice linearity for self-stabilizing algorithms.
\newblock In {\em Stabilization, Safety, and Security of Distributed Systems\/}
  (Cham, 2021), C.~Johnen, E.~M. Schiller, and S.~Schmid, Eds., Springer
  International Publishing, pp.~365--379.

\bibitem{Hedetniemi2003}
{\sc Hedetniemi, S., Hedetniemi, S., Jacobs, D., and Srimani, P.}
\newblock Self-stabilizing algorithms for minimal dominating sets and maximal
  independent sets.
\newblock {\em Computers \& Mathematics with Applications 46}, 5 (2003),
  805--811.

\bibitem{Kobayashi2017}
{\sc Kobayashi, H., Kakugawa, H., and Masuzawa, T.}
\newblock Brief announcement: A self-stabilizing algorithm for the minimal
  generalized dominating set problem.
\newblock In {\em Stabilization, Safety, and Security of Distributed Systems\/}
  (Cham, 2017), P.~Spirakis and P.~Tsigas, Eds., Springer International
  Publishing, pp.~378--383.

\bibitem{Kobayashi2022}
{\sc Kobayashi, H., Sudo, Y., Kakugawa, H., and Masuzawa, T.}
\newblock {A Self-Stabilizing Distributed Algorithm for the Generalized
  Dominating Set Problem With Safe Convergence}.
\newblock {\em The Computer Journal\/} (03 2022).
\newblock bxac021.

\bibitem{Leith2006}
{\sc Leith, D.~J., and Clifford, P.}
\newblock Convergence of distributed learning algorithms for optimal wireless
  channel allocation.
\newblock In {\em in Proceedings of IEEE Conference on Decision and Control\/}
  (2006), pp.~2980--2985.

\bibitem{Maruyama2022}
{\sc Maruyama, S., Sudo, Y., Kamei, S., and Kakugawa, H.}
\newblock A self-stabilizing 2-minimal dominating set algorithm based on loop
  composition in networks of girth at least 7.
\newblock In {\em 2022 {IEEE} International Parallel and Distributed Processing
  Symposium ({IPDPS})\/} (May 2022), {IEEE}.

\bibitem{Motskin2009}
{\sc Motskin, A., Roughgarden, T., Skraba, P., and Guibas, L.}
\newblock Lightweight coloring and desynchronization for networks.
\newblock In {\em IEEE INFOCOM 2009\/} (2009), pp.~2383--2391.

\bibitem{Astrand2010}
{\sc \r{A}strand, M., and Suomela, J.}
\newblock Fast distributed approximation algorithms for vertex cover and set
  cover in anonymous networks.
\newblock In {\em Proceedings of the Twenty-Second Annual ACM Symposium on
  Parallelism in Algorithms and Architectures\/} (New York, NY, USA, 2010),
  SPAA '10, Association for Computing Machinery, p.~294–302.

\bibitem{Turau2007}
{\sc Turau, V.}
\newblock Linear self-stabilizing algorithms for the independent and dominating
  set problems using an unfair distributed scheduler.
\newblock {\em Information Processing Letters 103}, 3 (2007), 88--93.

\bibitem{TURAU2010}
{\sc Turau, V.}
\newblock Self-stabilizing vertex cover in anonymous networks with optimal
  approximation ratio.
\newblock {\em Parallel Processing Letters 20}, 02 (2010), 173--186.

\bibitem{Xu2003}
{\sc Xu, Z., Hedetniemi, S.~T., Goddard, W., and Srimani, P.~K.}
\newblock A synchronous self-stabilizing minimal domination protocol in an
  arbitrary network graph.
\newblock In {\em Distributed Computing - IWDC 2003\/} (Berlin, Heidelberg,
  2003), S.~R. Das and S.~K. Das, Eds., Springer Berlin Heidelberg, pp.~26--32.

\end{thebibliography}
\bibliographystyle{acm}

\end{document}